\long\def\comment#1{}
\renewcommand\footnotetextcopyrightpermission[1]{} 
\newcommand{\tele}{Tele-QTP\xspace}
\newcommand{\plain}{TAG-QTP\xspace}
\newcommand{\simple}{TAG\xspace}
\newcommand{\tp}{teleportation\xspace}
\newcommand{\tpqdn}{Tele-QDN\xspace}
\newcommand{\shqdn}{TAG-QDN-S\xspace}
\newcommand{\mhqdn}{TAG-QDN-R\xspace}
\newcommand{\ie}{{\it i.e.}}
\newcommand{\eg}{{\it e.g.}}
\newenvironment{icompact}{
	\begin{list}{$\bullet$}{
			\parsep 1pt plus 1pt
			\partopsep 1pt plus 1pt
			\topsep 1pt plus 2pt minus 1pt
			\itemsep 1.5pt plus 1pt
			\parskip 0pt plus 2pt
			\leftmargin 0.15in}
	}
	{\normalsize\end{list}}
\begin{document}
\title{Quantum Transport Protocols for Distributed Quantum Computing}

\author{Yangming Zhao and Chunming Qiao \\
	Department of Computer Science and Engineering, University at Buffalo}

\renewcommand{\shortauthors}{X.et al.}

\begin{abstract}\label{abs}
Quantum computing holds a great promise and this work proposes to use new quantum data networks (QDNs) to connect multiple small quantum computers to form a cluster. Such a QDN differs from existing quantum key distribution (QKD) networks in that the former must deliver data qubits reliably within itself. Two types of QDNs are studied, one using teleportation (\tpqdn) and the other using tell-and-go (TAG) (TAG-QDN) to exchange quantum data.  Two corresponding quantum transport protocols (QTPs), named Tele-QTP and TAG-QTP, are proposed to address many unique design challenges involved in reliable delivery of data qubits, and constraints imposed by quantum physics laws such as the no-cloning theorem, and limited availability of quantum memory, a precious resource in QDNs.

The proposed Tele-QTP and TAG-QTP are the first transport layer protocols for QDNs, complementing other works on the network protocol stack. Tele-QTP and TAG-QTP have novel mechanisms to support congestion-free and reliable delivery of streams of data qubits by managing the limited quantum memory at end hosts as well as intermediate nodes, under distributed control.
Both analysis and extensive simulations show that the proposed QTPs can achieve a high throughput and fairness. This study also offers new insights into potential tradeoffs involved in using two different types of QDNs.
\end{abstract}

\maketitle

\section{Introduction}\label{sec:intro}
The technologies for building quantum computers are starting to emerge~\cite{jiuzhang,sycamore}. Quantum computing holds the great promise of being able to solve certain types of problems much more efficiently than classical computers~\cite{Grover}, and in fact, some classic NP-hard problems in a polynomial time~\cite{Shor}. However, in a foreseeable future, it is expected that we will be able to build small quantum computers with limited quantum computing power (in term of the number of qubits that can be handled). One way to overcome such a limitation is to network many small quantum computers with a quantum network to support distributed processing~\cite{distributedQC1,distributedQC2}, akin to today's distributed computing systems or cloud systems with classic computers~\cite{vl2,Bcube,DCell}, by enabling one quantum computer to send  quantum state information to another quantum computer. To distinguish such quantum networks specifically designed for supporting distributed quantum computing  from those used for quantum key distribution (QKD), we will refer to the former as quantum data networks or QDNs hereafter. 

In this paper, we investigate new QDN protocols to support dynamic data qubit exchanges between Alice and Bob interconnected with each other via either fiber-based~\cite{singlephoton} or free-space (between a ground station and a satellite)~\cite{Satellite} quantum channels. When Alice and Bob are not directly connected by a fiber, as often is the case in a QDN, an end-to-end quantum connection from Alice (the ingress quantum computer) to Bob (the egress quantum computer) will consist of at least one intermediate all-optical (quantum) switch, trusted relay, or quantum repeater. Hereafter, we will use the term "quantum nodes" to sometimes refer to either quantum computers, trusted relays or quantum repeaters.

In this paper, we propose the first-of-the-kind \emph{transport layer protocols} for QDNs, to be referred to as Quantum Transport Protocols (or QTPs). We note that since the primary functionality of any existing QKD network is to exchange shared secret keys between Alice and Bob, not all qubits transmitted from Alice to Bob in a QKD network need be received properly. This is a major difference between a QKD network and our envisioned QDNs,  and it is this major difference, and the need for an efficient transport layer protocols for QDNs, motivate this work on QTPs. Readers are referred to existing works on protocols addressing issues other than data transport such as network layer protocols~\cite{quantumRouting,infoQuantum,DirectEntangle,conextrouting} and link layer protocols~\cite{qumac}.

To further appreciate the need for the proposed new QTPs instead of using the classical TCP, we first note that 
due to quantum physics laws such as no-cloning theorem~\cite{noclone, nonclone1}, QDNs cannot switch quantum packets as the Internet does for data packets. As a result, a QDN will operate more like a circuit-switched network wherein stream of data qubits are to be transported from one quantum computer  to another.
In addition, a QDN will have to operate in a synchronous (time-slotted) fashion. Finally, a QDN will use the Internet as a separate (overlay) control and management network to send auxiliary messages such as request/response messages to establish an end-to-end quantum connection between Alice and Bob in QDNs.
 

\noindent\textbf{Overview.}
In this paper, we will investigate two types of QDNs, one based on \tp and referred to as \tpqdn, and the other based on direct quantum data exchange using "tell-and-go" (TAG) and referred to as TAG-QDNs. We will describe these two QDNs in more details in the next section. This work focuses on the design and evaluation of corresponding QTPs, namely \tele and \plain, assuming distributed control. In particular, we will focus on two notable design aspects, namely, how to deal with errors in quantum communications
, and how to manage (the limited) quantum memory. While both QTPs also assume distributed flow and congestion control as TCP does, the design choices to be made will distinguish  them from TCP.
We note that for this work on QTPs, it is assumed that data security over established quantum connections in a QDN has already been achieved based on the use of either  QKD or other approaches including post-quantum cryptographers, as well as  security-hardened, tamper-proof physical medium for the quantum connections. 

In the rest of the paper, we design and analyze \tele and \plain,
and compare their performance in terms of fairness, and effective throughput (or goodput). The main contributions are 
\begin{icompact}
	\item although there is existing work on QKD protocols and routing protocols for quantum networks, this is the first work on QTPs with focus on supporting distributed quantum computing;
	\item the proposed \tele and \plain represent two major classes of QTPs and comprehensive analysis and performance evaluation results have been presented.
\end{icompact}

The remainder of the paper is organized as follows.
In Section~\ref{sec:background} we first provide related background including related work. Then, in Section~\ref{sec:design}, we describe the proposed \tele and \plain, and analyze their properties. 
Extensive simulation results showing the efficacy of the proposed QTPs are 
presented in Section~\ref{sec:simu}. We conclude this paper in Section~\ref{sec:conclu}. \emph{This work does not raise any ethical issues.}

\section{Background and Assumptions}\label{sec:background}

As mentioned, we envision a distributed quantum computing system with multiple quantum computers (such as Alice and Bob) interconnected with each other via a QDN and propose and investigate new QTPs. In this section, we briefly discuss the basic aspects of quantum communications and QDNs, including transmission errors (or losses), types of QDNs, time slot duration, and quantum memory requirement, which impact on the designs and performance of the QTPs.

\noindent\textbf{Quantum Transmission Errors.}
Quantum communications is typically based on transmitting photons carrying quantum state information over a quantum channel such as a fiber or free-space optical link. Signal attenuation over a quantum channel, as well as interference, will degrade transmission reliability. In particular, since no amplification or regeneration is allowed in any quantum channel, the quantum transmission error probability may increase exponentially with the fiber length~\cite{DirectEntangle, repeaterless, rateloss, quantumVision}, making a fiber-based quantum channel much less reliable than its counterpart in a classical network. Given that a free-space optical link has a much lower attenuation than a fiber link, it is often advantageous to use two free-space links (an uplink and a downlink) plus a quantum satellite~\cite{Satellite}, instead of a fiber, to connect Alice and Bob when they are far away from each other. In addition, two remote quantum computers will likely use either quantum repeaters or trusted relays as intermediate nodes (and a quantum satellites can fulfil either role).

Hereafter, without loss of generality, we will focus only on fiber links, and omit the discussions on free-space optical links and quantum satellites. In addition, we assume that all messages for control and management of QDNs sent through the auxiliary classical network will be delivered reliably, with a negligible delay.

\noindent\textbf{Teleportation and TAG: Two Quantum Data Exchange Methods} Between Alice and Bob who are connected with a fiber, there are two basic methods for exchanging quantum data information. In the first method, called TAG, Alice generates photons carrying her data qubits, and directly transmits the photons to Bob. Alice also uses the classical network to send information about how she prepared the photons (\eg, which polarization base is used) so Bob can recover the quantum data information from the received photons properly. 
While TAG is simple, and could work efficiently when Alice and Bob are close by and connected with a highly reliable quantum channel, the photons carrying the data qubits could be lost or corrupted due to transmission errors. Since it is infeasible for Alice to keep the exact copy of the data qubits for retransmission due to the no-clone theorem, TAG may not be suitable when the quantum channel between Alice and Bob is not that reliable.

This is when the second method, called teleportation, comes in. To use teleportation, Alice can first generates a pair of entangled photons (also called a Bell pair) using \eg, an entangled photon source (EPS), keeps one of photons to itself and sends the other to Bob. Note that such a photon does not carry any quantum data information so if it is lost on its way to Bob, Alice can regenerate another Bell pair and try it again until Bob receives the other photon and becomes entangled with Alice. Alice then performs a Bell State Measurement (BSM) of her data qubit with her entangled photon, and sends the BSM results (through the classical network) to Bob. Bob can use this BSM result and perform a unitary operation on his entangle photon to receive the quantum data information teleported by Alice. In this way, the data qubit does not need to go through the fiber link, and thus the quantum data exchange between Alice and Bob based on teleportation is much more likely to succeed than simply using TAG.

Note that an EPS may fail to generate a Bell pair and even if it succeeds, one of the Bell pair photons may fail to reach Bob due to transmission error. Accordingly, it is not guaranteed that Alice and Bob can successfully become entangled at will. Instead, it is more like a probabilistic event. If Alice and Bob tries enough times, then they will become entangled (with a high probability).

\noindent\textbf{Alternate Ways to Establish Entanglement Links}
The above describes the basic approach to establishing an entanglement link between Alice and Bob. here, we describe two other alternatives which may be more applicable than the basic approach when Alice and Bob are far apart.
In the first alternative,  an EPS can be placed in the middle, \ie, inbetween Alice and Bob, to reduce the transmission error when it sends a Bell pair of photos, one to Alice and the other to Bob. When both Alice and Bob receive one of the Bell pair photons, they entangled.  The second alternative is more costly in that it requires both $X$ and $Y$ have an EPS. In addition, it requires another device to be placed in the middle, which can perform BSM. Using this alternative, both Alice and Bob generate a Bell pair, and then each keeps one of the Bell pair photons and sends the other to the BSM device in the middle. Once the device in the middle performs a BSM operation successfully, Alice and Bob are entangled. 

Since our focus is on QTPs, hereafter, we will omit detailed discussions on most issues at the physical, link and network layers. In particular, we will focus on the first basic approach to establishing entanglement link and assume that entanglement links can be successfully established when needed. In addition, although EPS can be costly and a scarce resource in a QDN using teleportation, we will focus on the efficient management (reservation and allocation) of quantum memory, as a representative of the scarce resource in a QDN, whether it uses teleportation or TAG.



\begin{figure}
	\centering
	\includegraphics[width=0.72\linewidth]{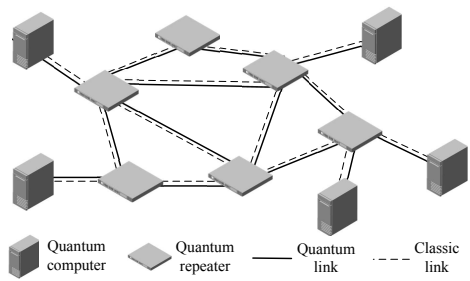}
	\vspace{-0.1in}
	\caption{A Teleportation based Quantum Data Network.}
	\label{fig:sys}
	\vspace{-0.2in}
\end{figure}

\begin{figure}
	\centering
	\includegraphics[width=0.75\linewidth]{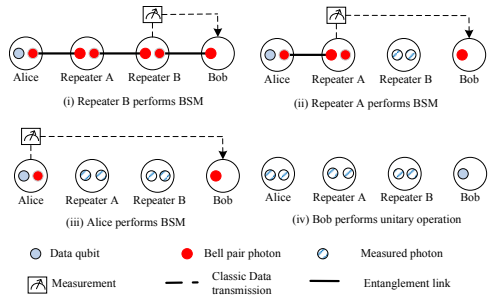}
	\vspace{-0.1in}
	\caption{Teleportation based data qubit transport.}\label{fig:teleportation}
	\vspace{-0.1in}
\end{figure}

\noindent\textbf{Tele-QDNs and Entanglement Connections}
Based on the above discussions, we now describe a \tpqdn with multiple quantum computers and repeaters as shown in Fig.~\ref{fig:sys}, and the concept of teleportation over an entanglement connection is consisting of multiple entanglement links in a \tpqdn.
Assume that an appropriate path from Alice to Bob has been identified (by a quantum routing protocol~\cite{quantumRouting,DirectEntangle, conextrouting}), and every two adjacent quantum nodes 
along the path share an Bell pair of photons (or an entanglement link),  as illustrated by an example shown in Fig.~\ref{fig:teleportation}. 

As can be seen from Fig.~\ref{fig:teleportation}, to establish these entanglement links,  Alice and Bob each needs to store one photon, while each intermediate quantum repeater needs to store two photons belonging to two different Bell pairs. To teleport, Alice, repeaters and Bob all need to perform certain operations and depending on who does first and who does last, there are three primary orders in which these operations are sequenced. 

In this work, given the entanglement links shown, we will perform all the operations needed according to the Repeater-first, Alice-next and Bob-last, or RAB order as follows:
(i). repeater 2 "reads" out the two stored photons from its quantum memory, and performs a BSM operation on them, and sends the BSM results to Bob. This will entangle a photon stored at repeater 1 (shown at its right-hand side) with the photon at Bob, and resulting in the destruction of entanglement links between repeaters 1 and 2, and between repeater 2 and Bob; In addition, repeater 2 has freed up two photon memory. 
(ii) repeater 1 reads out its two stored photons from its quantum memory and performs a BSM operation on them, and sends the BSM results to Bob.  This will entangle the photon at Alice with the photon at Bob, and resulting in the destruction of entanglement links between Alice and repeater 1, and between repeater 1 and Bob; In addition, repeater 1 has freed up two photon memory. 
(iii). After knowing that steps (i) and (ii) have finished successfully, Alice performs a teleportation operation (\ie, a BSM operation of her data qubit with her stored photon), and sends the BSM result to Bob; This changes the quantum state of the photon stored at Bob but also destroys the entanglement between Alice and Bob and frees up a quantum memory at Alice. 
(iv). After Bob receives all three BSM results, he reads out the stored photon from his quantum memory, and performs a series of unitary operations based on the received BSM results. This will enable Bob to receive the data qubit teleported by Alice, and free up a quantum memory at Bob.
As a variation, one may perform step (ii) before or concurrently with step (i), which will not make much difference. In both case, before Alice performs teleportation in step (iii), there is an end-to-end entanglement connection (or circuit) between Alice and Bob.

An alternative to RAB,  referred to as RBAB, is that after each of the first two steps in RAB, Bob will immediately perform a unitary operation based on the BSM results from repeaters 2 and 1, respectively. In this way, after step (iii), Bob only needs to perform a unitary operation based on Alice's BSM result. However, such a variation doesn't offer any meaningful benefit. On the other hand, since the photon at Bob needs to be read-out and stored back multiple times, its fidelity may be compromised.

Last but not least, the third alternative to RAB, referred to as ARB is that step (iii) in RAB is now done first, followed by steps (i) and (ii), which could be in some arbitrary orders, and finally step (iv). Note that in ARB, there is no end-to-end entanglement between Alice and Bob at the time when Alice performs teleportation.
Compared to RAB, a potential advantage of ARB is that by deferring the BMS operations at the repeaters, in case an entanglement link along the path deteriorates due to decoherence for example, teleportation can still succeed if there is a redundant entanglement link over the same fiber link (using a redundant quantum channel) during the same tie slot, thereby achieving a high resilience as discussed in \cite{infoQuantum}. On the other hand, it is also possible that after Alice performs her teleportation, her data qubit may not be able to reach Bob, whereas in RAB, Alice will only perform teleportation after end-to-end entanglement circuit has been established   disadvantage is that the quantum. In addition, with RAB, the repeaters can release their quantum memory earlier for other QTP sessions. We will study this variation in more detail in the future.



\noindent\textbf{Direct Quantum Data Exchange through Tell-and-Go (TAG)}
We now describe the \emph{second} quantum data exchange method called TAG which is applicable to two QDNs of the same type, referred to as \shqdn and \mhqdn. 

Consider a \shqdn with all-optical switches (which can be identical to those used in classical communications). Such a \shqdn can also be illustrated by using Fig.~\ref{fig:sys}, by replacing the quantum repeaters with the all-optical switches.
In a \shqdn, before a data qubit can be exchanged, these all-optical switches must properly configured using an out-of-band signaling to form an all-optical path from Alice to Bob. Then, Alice does the following: (i). prepares a photon carrying the data qubit (by \eg, choosing a polarization basis) for transmission; (ii). uses the classical Internet to inform Bob which basis she has used to prepare the photon (and thus how Bob should measure it later); and (iii). finally sends the prepared photon to Bob along the pre-established all-optical path. When the photon arrives, Bob can receive the data qubit correctly if he measure the received photon properly (\ie, according to how Alice prepared the photon). 

The case in a \mhqdn is similar: assuming that one trusted relay is attached to each all-optical switch, Alice will use TAG to send her data qubit to the first trusted relay in the same way as described before, except that the recipient is now the trusted relay instead of Bob. Then the trusted relay will store the photon carrying the data qubit, and then use TAG to send  the data qubit to the next trusted relay and so on, until Bob receives the data qubit. 

There are two types of trusted relays.  The first is akin to decode-and-forward in conventional wireless networks~\cite{DFNet}, in that their main function is to receive photons carrying data qubits via optical-to-electronic or O/E conversions, and then to regenerate photons (after E/O conversions) carrying the same data qubits for retransmission to the next relay until the photons reach Bob. While this type of relays may be suitable when Alice sends a given bit sequence of 0's and 1's as in the case of QKD, and hence each relay can measure them  and thus regenerate them, these relays are not suitable in QDNs where Alice sends photons, each carrying superposed quantum state information that cannot be measured by a relay. For such photons, their quantum state information must be transferred without going through any operations such as O/E/O conversions that will destroy the quantum state. Accordingly, in this work, we will focus on the second type, which is all-optical (quantum) relays, which use quantum memory to store photons carrying quantum state information, and performs operations such as unitary operations that do not involve any O/E/O conversion of the photons. More specifically, we will consider a special kind of quantum encoding by adapting the concept of quantum secret sharing~\cite{qushare}, with which the presence of these relays will help in increasing the transmission reliability of photons carrying quantum state information, as to be discussed next. 


\noindent\textbf{Discussions on  \shqdn and \mhqdn.}
Although \tp is a lot more exotic concept to networking researchers and as a result, it has received more attention, we consider a \shqdn or  a \mhqdn in this work because the technologies for \simple based quantum data exchanges are relatively more mature than those needed for a \tpqdn. Nevertheless, there are two major challenges associated with \simple: high quantum transmission errors (or losses of photons) and non-applicability of conventional (packet based) retransmission mechanisms for error recovery (both of which are due to the non-cloning theorem).

While a \shqdn is the least expensive and simplest (among the three types of QDNs) to build, control and manage, Alice and Bob will experience a high quantum transmission error probability when they are far away. A trusted relay is able to transmit and receive photons as Alice and Bob do, and can be placed inbetween Alice and Bob. Consequently, quantum transmission reliability between two quantum nodes in a \mhqdn can be improved. However, since the relays are not the true ingress or egress of data qubits, the tradeoffs, besides their high costs associated with the reception and regeneration of quantum state information at the trusted repeater nodes, include a potential degradation in the end-to-end throughput (as data will now have to go through multiple hops, one per relay, from Alice to Bob, and each ··hop" incurs some delays). 

\noindent\textbf{Achieve Reliable Delivery in TAG-QDNs}
While TAG is a familiar concept from classical networks, we note one additional important difference between quantum transmissions and classical transmissions, which will impact on the design of new QTP for \simple-based QDN, named \plain.
In a QDN, when a photon carrying quantum state information is sent, it cannot be copied (or cloned) and later retransmitted for the purpose of achieving reliable transmission~\cite{BB84,nonclone1}. 

To address this challenge which is unique to all QDNs, especially TAG-QDNs, we will investigate a new error management mechanism by adapting an idea from quantum secret sharing~\cite{qushare} in Section~\ref{subsec:plainTCP}. The basic idea is that in a \shqdn or \mhqdn, a data qubit needs to be encoded into multiple data qubits first  and consequently, a \plain needs not only to reserve multiple units of quantum memory at the sender and receiver for each data qubit, but also at the trusted relays in \mhqdn.





\noindent\textbf{Time-slots and Duration.}
Even if one disregards the impact of quantum transmission errors and corresponding mechanisms for error management, different QDNs will have different effects on the throughput of their QTPs due to  other operational differences.
For example, as in many previous works on quantum network protocols, here we also assume that our QDNs operate in a synchronous, time-slotted fashion such that in each time slot, a data qubit transmitted by a quantum node will arrive at another quantum node.
In a \shqdn, this means that the duration of each time slot must be long enough to allow 
(i). a path from each ingress (\eg, Alice) to each egress (\eg, Bob) to be identified; (ii). all the switches along the path to be properly configured to form an end-to-end all-optical connection; (iii). a message containing how Alice prepared its photon to be received by Bob (through the Internet) first, and finally (iv). a photon (carrying a data qubit) to be transmitted by Alice and received by Bob (correctly or not).

In a  \mhqdn, a time slot needs only to be long enough for a trusted relay to receive a data qubit from its upstream node.
Since the (minimum) length of time slot is often proportional to the number of switches along the path, one expects that in a  \mhqdn of a comparable size (wherein one trusted relay is attached to each optical switch in a \shqdn),
the length of each time slot will be shorter than in \shqdn, given that each trusted relay is only one hop away. Nevertheless, it will take multiple time slots for Bob to receive the data qubit from Alice (one per intermediate relay).

The duration of each time slot in a \tpqdn of a comparable size will be the longest among the three. This is because an entanglement link needs to be established between each pair of adjacent nodes along the path from Alice and Bob, which will take much longer than configuring an all-optical switch.  

We note that due to decoherence of entanglement circuit, even the length of a time slot in a \tpqdn should be no more than a second~\cite{qumac,quantumRouting}. In this study, we will assume that in a \tpqdn,  after \tp, one data qubit  will always be received correctly by Bob during each time slot. However,  due to the non-negligible quantum transmission error probability in a \shqdn, multiple time slots may be needed before Bob can receive the data qubit from Alice correctly. Therefore,
it is not clear whether the overall goodput after a period of time in a \tpqdn will be lower than that in a \shqdn (even though 
each time slot in a \tpqdn can be significantly longer than that in a \shqdn).



\noindent\textbf{Quantum Memory Limitation.}
In each of the three QDNs, Alice may concurrently transmit a window of $W$ data qubits to Bob within one time slot, after reserving a sufficient amount of quantum resources at Bob and all intermediate nodes along the way. 

Among the quantum resources, which include photon transmitters and receivers and switch/link capacity, quantum memory~\cite{quantumRouting,qumac} is the most precious. Physically, a unit of quantum memory can be a fiber-optic delay line or an atomic ensemble~\cite{memory}. Current technologies limit the practical size of quantum memory at a quantum node to about one hundred~\cite{memory}. 

Accordingly, we will focus on the impact of the limited quantum memory on the flow and congestion control, reliability and fairness as well as utilization aspects of QTPs. 
As mentioned, the need to use a novel error management mechanism for \simple-based quantum data exchange will distinguish \plain from TCP, whereas the need to reserve a significant amount of quantum memory at all intermediate nodes (quantum repeaters) will distinguish \tele from TCP.

\noindent\textbf{Related Work.}
Except for a few recent work on building a quantum internet~\cite{QuInternet,QuInter,qunature}, all previous works at the protocol level are 
for QKD, whether they use  prepare-and-measure~\cite{BB84,Ekert91,buildQuantum,SECOQC,QKD,DARPA} or \tp~\cite{DARPA,Satellite} to exchange candidate qubits to be used as a shared secret key. Since the goal of any QKD protocol is to receive just a sufficient number of these candidate qubits to build a shared secrete key, instead of exchanging data qubits, not all such candidate qubits sent from Alice need to be received correctly by Bob. As a result, none of the QKD protocols is applicable for a QDN.

All recent works on protocol stack have focused either on the link layer~\cite{qumac} or routing protocols~\cite{quantumRouting,DirectEntangle,conextrouting}. In addition, they have assumed \tp-based quantum data exchanges. To the best of our knowledge, \cite{qustack} was the first work that proposed an protocol-level approach to reliable transmissions based on \simple by leveraging a quantum secret sharing scheme, which however, is insufficient to achieve reliable end-to-end delivery in a QDN based on \simple. Thus, our work presented here is the first concrete work focusing on the transport-layer protocols for both \simple and \tp based quantum data exchange methods.

\section{Protocol Design}\label{sec:design}
In this section, 
we first discuss desired properties of any QTPs in Section~\ref{subsec:property} 
and then describe the proposed \tele and \plain in Section~\ref{subsec:teleTCP} and \ref{subsec:plainTCP}, respectively. 

\subsection{Desired Properties}\label{subsec:property}
One of our goals is to design QTPs that have the following properties:\\
(i). \textit{Reliability:} Due to the no-cloning theory, Alice cannot simply copy a data qubit for retransmission purposes. Accordingly, there should be a protocol-level mechanism to handle photon transmission errors to  ensure reliable delivery of every data qubit. This is particularly important for \plain.\\
(ii). \textit{Congestion-free:}
	Loss may occur due to congestion. Worse, since there is no plausible technique to handle "quantum packets" in a way similar to how data packets are handled in the classical Internet, QDNs will have to work with a stream of qubits. This means that if the qubits from different QTP sessions interleave with each other, and some of them are lost due to congestion, we will not be able to even identify which qubits are lost. Accordingly, we have to design \emph{congestion-free} QTPs.\\
(iii). \textit{High throughput:} One of the major challenges in QDNs for a foreseeable future is a low throughput (in terms of qubits/sec). QTPs need to efficiently manage quantum memory, the most precious resource for both flow and congestion control, in order to  achieve a high throughput. \\
(iv). \textit{Fairness:} Achieving fairness among all traffic flows in a distributed quantum computing system is important. In this work, we aim to maximize Jain's fairness index~\cite{fairness}, a widely used measure of fairness~\cite{Jain1, Jain2}. When the average sending window sizes of all the QTP sessions are $w_1, w_2, \dots, w_N$, the Jain's fairness index is calculated as 
	\begin{equation}\label{equ:fairness}
		J=\frac{(\sum_i w_i)^2}{N\sum_i w_i^2}
	\end{equation}
	The value of the "$J$" index will 1 when all the sessions achieve the same sending window size.\\
(v). \textit{Scalability:} To handle dynamic traffic flows in a distributed quantum computing system, a QTP should be able to quickly and efficiently adapt the sending window sizes and  quantum memory allocations. 


\subsection{Teleportation QTP (\tele)}\label{subsec:teleTCP}
\begin{figure}
	\centering
	\includegraphics[width=0.95\linewidth]{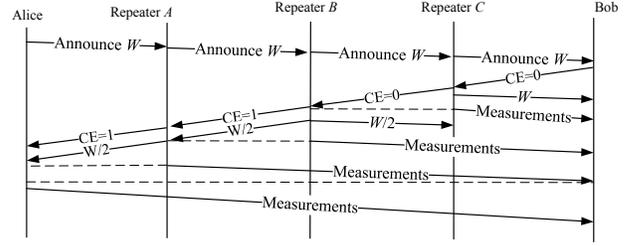}
	\vspace{-0.1in}
	\caption{Time sequence for \tele. Repeater $B$ experiences congestion and sets the CE flag (\ie, CE=1).}\label{fig:timeseq}
	\vspace{-0.2in}
\end{figure}
In this section, we describe the proposed \tele for a \tpqdn. As mentioned earlier, in  a \tpqdn, Alice needs to first establish $W$ entanglement circuits  in a time slot with Bob, if Alice wants to teleport $W$ qubits (because only one qubit can be teleported per circuit, after which the circuit will be destroyed). This means Alice and Bob each needs to use at least $W$ qubits of quantum  memory, and at least $2W$ qubits of quantum  memory at each of the intermediate repeaters. Unlike in a classical network, where congestion may occur during the actual data transmissions of TCP segments for example, here, the main challenge becomes how to establish these circuits given the limited quantum memory at each repeater, since once these circuits are established, teleportation of actual data qubits will not experience any congestion (and thus has a negligible loss as mentioned earlier).

\emph{Distributed vs Centralized Reservation}: Given that there may be a large number of  sources (Alice), each of which wants to establish a large number of circuits in a time slot, and it is desirable to establish these circuits as quickly as possible (due to limited duration of a time slot), we propose a distributed reservation (or signaling) protocol to circuit establishment.

In a typical implementation of a distributed reservation approach, each source sends a request (in the classical network) for  $W$ circuits along a path to the corresponding destination (Bob), where the path is determined by a quantum routing protocol. Such a distributed protocol typically has two phases: a forward phase and a backward phase. In the forward phase, each request is processed hop-by-hop by local controllers co-located with the quantum repeaters to see if $W$ entanglement links can be established and then forwarded to the next hop (controller) for further processing until it reaches the destination. At this time, the forward phase ends and the backward phase will start. 

One possible way to make reservation for the requested $W$ circuits is for each local controller to reserve as many quantum memory units (or equivalently to establish as many entanglement links) as possible during this forward phase. Such an approach would be  similar to the so-called source-initiated reservation (SIR) protocol, proposed in \cite{sir-dir} for  establishing an all-optical lighthpath (or wavelength path) in a Wavelength Division Multiplexed (WDM) network. 
An alternative is that, during the forward phase, each local controller only collects the information on the (maximum) number of quantum memory units (or equivalently entanglement links) that are needed in a hop-by-hop fashion. Then in the backward phase, starting from the destination, quantum memory at each node is allocated (and entanglement links are established). This alternative is similar to how  the destination-initiated reservation (DIR) protocol~\cite{sir-dir} works.  In this paper, we will focus on an approach which closely resembles DIR. Readers are referred to ~\cite{sir-dir} for discussions on the pros and cons of the SIR and DIR protocols.

Note that, using such a distributed reservation approach,  the amount of time it takes to establish circuits for all SD pairs is proportional to the maximum round-trip time, over all SD pairs, between a source and its destination (which includes both the total propagation time and the sum of the hop-by-hop request/response processing time), but is largely independent on the number of SD pairs. On the other hand, if one uses a centralized reservation approach whereby a central controller processes all the requests to derive the reservation decisions, then the amount of time needed to finish processing all the requests could be longer, since this will not only include the propagation time between the sources and the central controller, and between the central controller and the repeaters, but also become proportional to the total number of requests. Nevertheless, with the advance in Software Defined Networks (SDN), centralized control, which could potentially achieve a more efficient resource utilization and a higher network-wide throughput, is also very promising,  and thus will be left as a topic for our future study.

\emph{A TCP Friendly Heuristic for Distributed Reservation}: In order to improve as much as possible both the throughput (in terms of the number of circuits that can be established per source) and fairness among difference sources under distributed control, we propose to adopt a TCP friendly heuristic reservation approach under the general framework of DIR. The first design goal of this distributed reservation protocol is that during each time slot, the protocol should prevent some SD pairs from monopolizing the resources. In other words, the protocol should guarantee that any source who wants to establish a circuit can do so, as long as it is feasible, without being starved due to aggressive behaviors by other sources. The second design goal is to deal with unknown but dynamically changing demands from various sources effectively so as to guarantee that over time, (or asymptotically),  any source can achieve its maximal throughput subject to the network-wide fairness requirement.  The third design goal is to enable each local controller to be able to decide on how much quantum memory to be allocated to a given SD pair, and take appropriate actions (by establishing entanglement links and performing BSM) during the backward phase, without having to go back and forth,  in order to speed up the circuit establishment process.

We observe that although TCP was designed for a packet-switched network and hence can't be directly applied here,  it still has a tried-and-true congestion mechanism that provides many desirable features. Accordingly, we can adapt a TCP-based mechanism to meet the above design goals. More specifically, in this paper, we use an Additive Increase and Multilicative Decrease (AIMD) based approach to determine how many entanglement links are to be committed to each source at each hop (during the backward process). As in a AIMD based TCP congestion control mechanism, the basic idea is to allow each source to request say $W$ circuits in one time slot and $W+1$ circuits in the next if possible. If/when there isn't a sufficient amount of quantum memory at a given hop to satisfy all the requests from different sources,  the protocol will cut the number circuits granted to a given source by half.  In order to facilitate the presentation, hereafter, we use the terms common in TCP congestion control such as "sending window size" and "congestion encountered", respectively to refer to the number of circuits to be established, and to indicate whether the total number of requested circuits through a given repeater has exceeded the number of quantum memory it has or not. Note that other heuristics can also be used, and we leave them for future studies.


\subsubsection{Time Sequence in \tele}

Fig.~\ref{fig:timeseq} illustrates the two-phase (forward and backward) process in \tele  to establish $W$ circuits (using the auxiliary Internet for signaling messages).

During the first step (\ie, the forward phase), Alice announces the desired sending window size, $W$, (or equivalently a request to reserve $W$ circuits) along a pre-selected path (by some routing protocols, such as~\cite{quantumRouting, infoQuantum}). After receiving this announcement (or request), every repeater along the path takes note of the requested window size, and waits for similar announcements from other QTP sessions to arrive. Note that every 
source who wants to establish entanglement circuits in this time slot must either send out their announcements at the start of the time slot (or at the beginning of the next time slot). Assuming that it takes at most $L$ units of time for an announcement from Alice to reach Bob, then no other destination can start Step Two earlier (even though it has already received all the announcements from other ingress nodes).

In Step Two (the backward phase), Bob initializes a response message and sends it back to Alice (along the same path but in reverse direction). Similar to DCTCP~\cite{dctcp} and DCQCN~\cite{DCQCN}, the response message contains a Congestion Experienced (CE) flag to indicate if there is congestion along the selected path. The sending window size will be cut to half (\ie, $\lfloor W/2 \rfloor$) if and only if the CE flag is set by any node.

More specifically, when a node receives (or Bob initializes) a response message, it will first check if the CE flag has been set already or not. If so, it forwards the response message to its upstream node, and tries to set up $\lfloor W/2 \rfloor$ entanglement links with its upstream node (towards Alice). If the CE flag is not yet set, the node will determine if it can accommodate all circuits requested by all sources that will go through this node. 
In other words, for each and every QTP session requesting $W$ circuits going through this node,  
it will determine if it could allocate a total of $2W$ units of quantum memory for the QTP session. If it could, it will establish $W$ entanglement links with its downstream quantum node (towards Bob) and forwards the response message without setting the CE flag.
Otherwise, this QTP session's sending window size will be cut to half (as to be explained, \tele has built-in mechanisms to ensure that all nodes along the path can handle at least $\lfloor W/2 \rfloor$ entanglement links per hop for this QTP session during this time slot. Accordingly, none of the other upstream repeater nodes will need to cut the sending window size any further. More specifically, this repeater node will do the following:  (i). set the CE flag, and send the response to an upstream node; (ii). allocate $\lfloor W/2 \rfloor$ units of quantum memory for each direction (upstream and downstream); and (iii). establish $\lfloor W/2 \rfloor$ entanglement links with both its upstream and downstream nodes.
In any case, once an entanglement link with its upstream node has been established, the repeater will perform a BSM operation and send the BSM result to Bob. Such an operation can take place while  other upstream repeaters process the response message.
Meanwhile, Bob will store the BSM result, along with the BSM results from all other repeater along the path, and use them when Alice performs a teleportation and sends her BSM results to Bob.

For Alice and Bob, there are two exceptions: (i). As the source and destination of an entanglement path, Alice and Bob do not need to set up entanglement links with their upstream and downstream node, respectively; (ii). Bob only needs to determine if it could allocate a total of $W$ units of quantum memory for this QTP session.

\subsubsection{Flow and Congestion Control}

In this subsection, we describe how Alice controls the sending window, and how each node assigns quantum memory to each QTP session. 

\noindent\textbf{Sending Window Control (SWC):}
\begin{algorithm}[t]
	\caption{Sending Window Control (SWC)}
	\label{alg:sender}
	\begin{algorithmic}[1]
		\setlength\rightskip{1em}
		\REQUIRE Number of qubits to be sent $Q$
		\STATE Initialize sending window $W\gets 1$, state $S \gets SS$ \label{lin:init}
		\WHILE{$Q>0$}
			\STATE Announce sending window size $W$ \label{lin:reserve}
			\IF{$CE = 1$}
				\STATE $W \gets \lfloor W/2 \rfloor$, $S\gets CA$ \label{lin:ce}
			\ENDIF
			\STATE Try to establish  $\min\{W,Q\}$ entanglement circuits, $Q\gets Q-\min\{W,Q\}$ \label{lin:teleport}
			\IF{$S=SS$}\label{lin:beginIncre}
				\STATE $W\gets 2\times W$
			\ELSE
				\STATE $W\gets W+1$
			\ENDIF\label{lin:endIncre}
		\ENDWHILE
	\end{algorithmic}
\end{algorithm}
To ensure congestion free, every QTP session will use the proposed Algorithm~SWC. Similar to TCP, \tele has two states for each QTP session, \ie, slow start (SS) and congestion avoidance (CA). When a new QTP session starts, Alice is in the state SS, and the initial sending window size $W$ is set to 1 (Line~\ref{lin:init}). In the next time slot, after $W$ qubits are teleported, the sending window size will double if the session is in the state SS, or  increase by 1 if the session is in the state CA (Lines~\ref{lin:beginIncre}--\ref{lin:endIncre}). Any time Alice receives a response with the CE flag set, the sending window size will reduced by half and  the state will be set to CA (Line~\ref{lin:ce}).  In this way, the QDN will be congestion-free (see detail in Theorem~\ref{the:congestion}).

\noindent\textbf{Quantum Memory Assignment (QMA):}
\begin{algorithm}[t]
	\caption{Quantum Memory Assignment (QMA)}
	\label{alg:repeater}
	\begin{algorithmic}[1]
		\setlength\rightskip{1em}
		\REQUIRE The quantum memory requirement from session $n$, $W_n$ (Suppose there are $N$ sessions and $W_1 \geq W_2 \geq \dots \geq W_N$); the total amount of quantum memory it hosts, $M$; the amount of quantum memory required to teleport one qubit, $a$
		\STATE $A \gets \sum_n aW_n$, $n\gets 1$
		\IF{$A \leq M$}
			\STATE Reserve $aW_n$ units of quantum memory to session $n$ for all sessions \label{lin:noce}
		\ELSE
			\WHILE{$A>M$}\label{lin:beginhalf}
				\STATE Mark $CE \gets 1$ for session $n$ \label{lin:mark}
				\STATE Reserve $a\lfloor W_n/2 \rfloor$ units of memory for session $n$
				\STATE $A\gets A-a\lfloor W_n/2 \rfloor$, $n\gets n+1$
			\ENDWHILE\label{lin:endhalf}
			\WHILE{$n\leq N$}
				\STATE Reserve $aW_n$ units of memory to session $n$
			\ENDWHILE
		\ENDIF
	\end{algorithmic}
\end{algorithm}
Algorithm~QMA, which should be deployed on every node (\ie, both repeaters and quantum computers), shows how each node in \tele to manage the precious quantum memory. The main idea is that each node will first try to satisfy all the quantum memory requests received in the first step of the above described 
teleportation process (Line~\ref{lin:noce}). If there is not enough quantum memory,
the node fulfills only half of the quantum memory requirement for selected sessions (Lines~\ref{lin:beginhalf}--\ref{lin:endhalf}), starting with the one which requires the largest sending window (Line~\ref{lin:mark}), until either it can satisfy the requested window sizes from all remaining sessions, or have cut all the requested window size by half.

It should be noted that while Bob only needs to reserve $W$ units of quantum memory to support a QTP session of window size $W$, all other nodes (\ie, Alice and all repeaters along the path) have to reserve $2W$ units of quantum memory. To ensure fairness among multiple QTP sessions, in \tele, we divide the quantum memory for data exchanges at each quantum computer (which could play the role of Alice for some sessions, and Bob for other sessions) into two parts: one-third for receiving data qubits  and two-thirds for sending data qubits. Correspondingly, there would be two processes of Algorithm~QMA on each quantum computer: one manages the memory for receiving data qubits ($a=1$), while the other manages the memory for sending data qubits ($a=2$).

\noindent\textbf{Variations of \tele:}
While there are many possible variations and potential improvements, we present two alternatives to \tele which we will compare via simulations in a later section:\\
(i). \textbf{Explicit Window (EW):} This alternative aims to achieve fairness at each and every node and works as follows. As in \tele, in each time slot, Alice will send a request to Bob for a QTP session along a pre-determined path. However, Alice does not (need to) specify the sending window size at all.  After Bob and all intermediate nodes have received all the requests, each node capable of supporting a sending window of size $C$ (\eg, either Bob having $C$ units, or an repeater having $2C$ units) will assign a window size of $C/N$ to each of the $N$ sessions requested of the node. Alice then uses the smallest window size among all the nodes along the path for the current time slot. \\
(ii). \textbf{Fair Resource Allocation (FRA):} This alternative aims to become a hybrid of \tele and EW and works as follows. As in \tele, in each time slot, Alice will request a window size and send the request to Bob for a QTP session along a pre-determined path. Different from \tele, Bob and each repeater simply checks if the requested window size is larger than $C/N$ or not, and if so, it sets the CE flag (which will force Alice to cut its window by half).





\subsection{Tell-and-Go QTP (\plain)}\label{subsec:plainTCP}
In this subsection, we first describe \plain for a \shqdn, wherein only Alice and Bob (but no intermediate nodes) have to reserve quantum memory. However, given lossy quantum channels (and the no-cloning theorem which invalidates the conventional restransmission-based approaches), \plain has to address two coupled problems: reliable end-to-end delivery, and the corresponding quantum memory allocation (at both Alice and Bob).

\noindent\textbf{Reliable End-to-End Delivery:}
To provide reliable end-to-end delivery of data qubits, \plain borrows the idea from quantum secret sharing~\cite{qushare}, whose main idea can be summarized as follows. For each data qubit, we encode it with $N$ qubits (each one of these qubits will be called a \emph{sharing} to distinguish it from the original data qubit),  such that we can recover the original data qubit with any $K$ of these $N$ sharings. We refer to such a method as $(K,N)$-threshold sharing. Although quantum secret sharing was originally proposed for an application where there are multiple ($N$) parties, to ensure that the original data qubit can be recovered if and only if $K$ or more parties agree to collaborate on the recovery operation, here, we adapt it to ensure reliably delivery of a data qubit from Alice to Bob, in a way similar to, but different from the classical error correction code (FEC)~\cite{fec}, as to be explained next.


It has been shown that for a data qubit with larger than or equal to 3 dimensions (in terms of quantum states), one can use a (2,3)-threshold sharing scheme, while for a 2-dimensional qubit, one needs to use a (3,4)-threshold sharing scheme~\cite{qushare}. Below, we discuss the case with a (2,3)-threshold sharing scheme but note that our method can be easily extended to the case with a (3,4)-threshold sharing scheme. 

Let the sharings for a data qubit, $A$, be denoted by  $A_1^{0}A_2^{0}A_3^{0}$. Alice will first send $A_1^0$ (in the first time slot). 
If $A_1^0$  fails to arrive at Bob, Alice can recover the original data qubit $A$ (without violating the no-cloning theory) since she still stores $A_2^0A_3^0$. Afterwards, Alice repeats the process all over again in the next time slot.

If $A_1^{0}$ is successfully delivered,  Bob will store it (but can't measure it for now) and Alice will send $A_2^{0}$ in the next time slot. If $A_2^{0}$ is also successfully delivered, Bob can then 
re-construct the original $A$ with $A_1^{0}A_2^{0}$, and this completes the reliable delivery of $A$. Otherwise, Alice can encode the remaining $A_3^{0}$ with a (2,3)-threshold sharing scheme and try to deliver the three sharings corresponding to $A_3^{0}$, to be denoted by $A_1^{1}A_2^{1}A_3^{1}$ using a similar process. If (and only if) $A_3^{0}$ eventually gets reconstructed by Bob (in a recursive process), Bob can then reconstruct the original data qubit $A$ with $A_1^{0}A_3^0$. To facilitate the following discussion, we will refer to 
sharings of the form $A_1^{k}$, $A_2^{k}$ and $A_3^k$ for any $k$ as the first, second and third sharing in the $k$th round, respectively. Appendix~\ref{app:sharing} illustrates the above process using a state machine.


The above discussion implies that Alice has to reserve 3 units of quantum memory to store 3 sharings for each data qubit. 
In addition,  Bob needs at least 2 units of quantum memory to receive data qubit $A$ (this  best case occurs when $A_2^{0}$ arrives successfully). Note that, however, whenever sharing $A_2^{k}$ ($k\geq 0$) is lost, Bob will need 
$k+3$ units to store sharings $A_1^{i}$ for all $0 \leq i \leq k$ plus $A_1^{k+1}$ and $A_2^{k+1}$. All these units of quantum memory can be released if and only if sharing $A_2^{k+1}$ arrives at Bob.

When $k$ becomes large, Bob may run out of the $W$ units of quantum memory reserved for this session, and consequently,  not only Alice will be unable to send $A_2^{k+1}$, thus preventing Bob from releasing any of its quantum memory, but Bob cannot send to other nodes as well, potentially resulting in a deadlock.
This is why it is critical for \plain to manage quantum memory for Alice and Bob.




\noindent\textbf{Quantum Memory Allocation:}
In \plain, Alice can still leverage Algorithms~SWC and QMA to control the sending window and assign quantum memory to each QTP session, respectively. However, several changes are needed. First, for reasons to become clear, the initial window size for each session will be set to $W=2$ instead of 1 (even though Alice may only have one data qubit to send). Secondly, while in the CA phase, $W$ will increase by one after each time slot, regardless of how many data qubits (or sharings) have been successfully delivered in the current time slot.

In addition, \plain needs to make the following major modifications to support reliable delivery of data qubits encoded using sharings: \\
(i). Both Alice and Bob have to maintain the state machine for each and every data qubit and different data qubits will be in different states, characterized by the most recent (or largest) value of $k$ associated  with their corresponding first sharings that have been successfully transmitted by Alice and stored at Bob. For example, after the previous time slot, Alice has successfully transmitted $i$ first sharings corresponding to data qubit $X$, and thus will need to transmit the second sharing of the $i$-th round next; meanwhile, Alice has failed to transmit the second sharings of the $j$-th round corresponding to data qubit $Y$, and will need to transmit the first sharing of the $(j+1)$-th round next.
During this time, we note that  Bob stores all $i$ first sharings corresponding to data qubit $X$, and all $j$ first sharings corresponding to data qubit $Y$;\\
(ii). Continuing from the above example, if the second sharing of the $i$-th round for $X$ arrives at Bob, Bob can release all $i+1$ units of the quantum memory allocated to the sharings corresponding to $X$. On the other hand, even if Alice transmits the first sharing of the $(j+1)$-th round successfully to Bob, Bob won't be able to release any quantum memory allocated to $Y$. Instead, Bob will have to store this along with the existing $(j+1)$ first siblings for $Y$.

Based on the above observation, in order to enable Bob to release the maximal number of quantum memory units as soon as possible, Alice should first select the data qubit with the largest $k$ (say it's A), for which the corresponding first sharing  of the $k$-th round has been stored at Bob, and send the second sharing of the $k$-th round (\ie, $A_2^k$), in hopes of enabling Bob to release all stored first sharings for $A$; Alice will try to send as many of these second sharings as possible,
subject to the availability of the quantum memory available at Bob.\\
%
(iii) Assuming that at the beginning of a time slot, Bob has stored a total of $s$ first sharings for different data qubits from the current sending window, and thus can receive $(W-s)$ additional sharings in this slot. Further assume that corresponding to these $s$ first sharings at Bob, Alice have only  $a_2 \leq s$ second sharings to send ("less than" is possible as the rest of $(s- a_2)$ second sharings have already been lost in the previous time slot, as assumed for data qubit $Y$ described above). In such a case, Alice will send out
as many first sharings as possible starting with those sharings having the largest $k$ and subjecting to the amount of quantum memory available at Bob.


Note that, the sending window of every session will be periodically cut by half. To prevent the number of sharings stored by Bob is larger than the window size, Alice will try to limit the number of the first sharings Bob has to store at the end of each time slot to $\lfloor W/2 \rfloor$. To this end, Alice puts an upper bound on the number of the first sharings Alice can send, which is $a_1=\max\{0, \lfloor \frac{W}{2}\rfloor +  a_2 - s\}$.  Intuitively,this implies that at the start of the session, Alice will send only $W/2$ first sibling (this is why we will set the initial window size to $W=2$). Once the window size is less than the first sharings stored at Bob, corresponding QTP session will be close to avoid congestion.\\
(iv). Alice reserves quantum memory to each session based on the number of data qubits that has be encoded as sharings rather than the window size $W$. In general, we have the following three desired constraints: (1). $a_2 \leq s$; (2). $s + a_1 + a_2 \leq W$; and (3). $a_1 + (s - a_2) \leq W/2$, we can use linear programming to derive equation (4).  $a_1 + a_2 \leq 3W/4 $. This equation means that during each time slot, Alice can send different sharings corresponding to at most $3W/4$ data qubits, each of which requires Alice to reserve 3 units of quantum memory. Consequently, this means for a sending window of size $W$, Alice needs to reserve a total of $9W/4$ units of quantum memory at most, which is fewer than $3W$. 

Similar to the case \tele, since a quantum computer may serve the role of Alice (or ingress) for some sessions and Bob (or egress) for other sessions in a distributed quantum computing system,  we need to partition the total amount of quantum memory available at each quantum computer 
into two parts:
one part containing $9/13$-th of the total  for sending sharings, and other containing the remaining $4/13$-th of the total for receiving sharings. For \plain, this is important for avoiding deadlock while supporting reliable delivery.
%
%

\noindent\textbf{Extension to \mhqdn:}
As noted earlier, \shqdn, though simple, may not be scalable when quantum computers are distributed in a large  geographic area. To support reliable delivery when Alice and Bob are far from each other, \mhqdn can be leveraged if \simple is preferred over \tele, as the method  for quantum data exchanges. Nevertheless, the above described \plain can be applied to not only Alice and Bob, but also all the trusted relays in a \mhqdn.
Since each trusted relay has to receive photons from an upstream node, decode the quantum information they carry, generate new data qubits, and send them to the downstream node, it will have to divide their quantum memory into two parts as described in (iv) above.

\subsection{Analysis}\label{subsec:analysis}

Based on the algorithm design, we know \tele and \plain are scalable and reliable. In this section, we prove that both of them can achieve congestion-free, fairness, and high quantum memory utilization. We will focus on \tele, although similar analysis apply to \plain too since the latter is an extension of \tele. 

In the analysis, we consider $N (N\geq 2)$ infinitely long-lived \tele sessions which share (or compete for) a limited amount of quantum memory at a bottleneck quantum node whose maximum capacity is to support the \tp of $C (C \geq N)$ data qubits concurrently (in the same time slot). More specifically, if Bob is the bottleneck node, then this means Bob has  $C$ units of quantum memory to receive data qubits, whereas if a quantum repeater is the bottleneck node, then the repeater has $2C$ units of quantum memory.

Due to the space limit, the solid proofs of all the theorems below are omitted but can be found in Appendices~\ref{app:congestion}--\ref{app:conserving}.
\begin{theorem}\label{the:congestion}
	(Congestion-free) During any time slot, there is always enough quantum memory to support quantum data exchanges over all QTP sessions, as long as every ingress quantum computer (Alice) sends data qubits following its sending window size for the time slot.
\end{theorem}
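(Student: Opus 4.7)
The plan is to prove congestion-freeness by establishing two invariants that together ensure, for every node $v$ and every time slot, that the memory actually used during teleportation does not exceed $v$'s capacity $M_v$. First I would show that Algorithm~QMA at every node $v$ terminates with a total reservation at most $M_v$. The while loop monotonically decreases the tentative total demand each time a session is halved, and the hypothesis $C \geq N$ guarantees that even when every session is cut down to a window of one, the allocation remains feasible. Consequently the loop exits with $A \leq M_v$, and the second pass reserves only the remaining fully satisfiable requests, so $\sum_s R_s^v \leq M_v$, where $R_s^v$ denotes the units QMA reserves at $v$ for session $s$.

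Next I would establish the coupling invariant: for every session $s$ traversing $v$, the effective sending window $W_s^\ast$ that Alice uses on Line~\ref{lin:teleport} of Algorithm~SWC satisfies $W_s^\ast \leq R_s^v / a$. The argument splits on whether the response returned to Alice carries the CE flag. If CE$=0$, no node set the flag along the return trip, so every node on the path reserved $aW_s$ units and Alice sends exactly $W_s$, matching the reservation. If CE$=1$, Alice halves her window to $\lfloor W_s/2\rfloor$; meanwhile every node $u$ on the path either did not trigger CE itself (and reserved $aW_s \geq a\lfloor W_s/2\rfloor$) or triggered CE / received the response with CE already set (reserving $a\lfloor W_s/2\rfloor$). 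Either way the per-node reservation dominates $W_s^\ast$.

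Combining these two invariants, the memory actually consumed at $v$ during Step Three equals $\sum_{s \ni v} a W_s^\ast \leq \sum_{s \ni v} R_s^v \leq M_v$, so no node overflows and the QDN is congestion-free in every slot. The main obstacle I expect is the asymmetry created by the backward phase: the CE flag may first be set in the middle of the return traversal, so nodes downstream of that point have already committed reservations of $aW_s$ (before the flag arrived) while nodes upstream see CE$=1$ and commit only $a\lfloor W_s/2\rfloor$, yet both sides must remain consistent with the single halved value $W_s^\ast$ that Alice ultimately sends. The resolution, and the key structural observation behind the proof, is that the \emph{announced} window in Step One is identical at every node, so per-node divergence is limited to whether the node halves or not, never to a mismatch in the reference size; this is precisely what lets the dominance $W_s^\ast \leq R_s^v/a$ hold uniformly along the path.
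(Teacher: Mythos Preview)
Your approach differs from the paper's. The paper proves the theorem by a direct temporal induction at the bottleneck: at slot $0$ every window is $1$, so $\sum_n W_n^{(0)}=N\leq C$; at slot $t{+}1$ each session announces $W_n^{(t)}+1$, and after a possible halving one has $\lfloor(W_n^{(t)}+1)/2\rfloor\leq W_n^{(t)}$, so the total remains $\leq C$. You instead argue structurally within a single slot via two invariants: (i) QMA never over-reserves at any node, and (ii) Alice's effective window $W_s^\ast$ is dominated by the per-node reservation $R_s^v/a$ everywhere on the path. Your invariant (ii) and the handling of the backward-phase CE asymmetry are correct and in fact more careful than the paper, which never explicitly treats path-level CE propagation.

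The gap is in invariant (i). You claim QMA's while loop can always drive the tentative demand $A$ down to $M_v$ because ``the hypothesis $C\geq N$ guarantees that even when every session is cut down to a window of one, the allocation remains feasible.'' But QMA does not iterate on a session: it halves session $n$ \emph{once} and then increments $n$. Hence the smallest total the loop can reach is $\sum_n a\lfloor W_n/2\rfloor$, not $aN$. If the announced windows were collectively large enough that this once-halved sum still exceeded $M_v$, the loop would exhaust the session list without meeting its exit condition, and your invariant would fail. Ruling this out requires precisely the temporal induction the paper uses: one must know that the windows announced at slot $t{+}1$ each exceed the previous effective window by at most one (or double it in SS, which halving undoes), and that those previous windows already summed to at most $C$, so halving all of them brings the total back under $C$. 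Your per-slot reservation argument therefore cannot stand alone; to make invariant (i) go through you need to import the paper's inductive bound on the announced sizes.
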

\begin{theorem}\label{the:fairness}
	(Fairness) In a stead state, the window size of every QTP session will be sawtoothly varying within the range of $[W^*/2, W^*]$, with the average window size being $3W^*/4$, where $W^*$ is the maximum window size determined by $N$ and $C$.
\end{theorem}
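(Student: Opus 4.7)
The plan is to treat the $N$ competing \tele sessions as a coupled AIMD dynamical system whose steady-state orbit is dictated by the QMA rule at the bottleneck. First, I will pin down $W^*$: since each session consumes one unit of quantum memory per circuit at Bob and two units at a repeater, the largest per-session window at which all $N$ requests can still be simultaneously granted is $W^* = \lfloor C/N \rfloor$. Theorem~\ref{the:congestion} guarantees that whenever $\sum_i W_i \leq C$ every request is honored in full, while any collective request that pushes the total above $C$ causes QMA (lines~\ref{lin:beginhalf}--\ref{lin:endhalf}) to set the CE flag on the session(s) with the largest window.

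Next I will characterize a single steady-state cycle for an individual session. Upon receiving a CE flag, Algorithm~SWC halves the window and drives the session into CA (line~\ref{lin:ce}); thereafter the window grows by exactly one per slot (lines~\ref{lin:beginIncre}--\ref{lin:endIncre}). The window cannot stay above $W^*$: as soon as some session reaches $W^{*}+1$, the aggregate overflows and QMA cuts that session back to $\lfloor (W^{*}+1)/2 \rfloor$, which for the purposes of the sawtooth is $W^{*}/2$. Hence each session is confined to $[W^{*}/2,\, W^{*}]$ and climbs at unit slope between halvings, taking $\lceil W^{*}/2 \rceil$ slots per up-ramp. Time-averaging over one ramp then gives $(W^{*}/2 + W^{*})/2 = 3W^{*}/4$, and since in steady state every slot belongs to exactly one such ramp, this is also the long-run time average for that session.

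The main obstacle will be justifying that \emph{every} session, not merely the aggregate, traces the full sawtooth. QMA halves sessions greedily by current window size only until the aggregate fits, so a single overflow event typically cuts just a subset of sessions and strict lockstep synchronization need not hold. I plan to resolve this by a symmetry/ergodicity argument: QMA is identity-blind and depends only on the multiset $\{W_n\}$, so relabeling sessions is an invariance of the dynamics. Combined with the observation that any session that escapes a cut in slot $t$ carries the largest window into slot $t+1$ and is therefore the prime target of QMA the next time the aggregate overflows, this forces a round-robin cutting pattern that in steady state selects every session with equal frequency. Hence each session independently traces the same $[W^{*}/2, W^{*}]$ sawtooth up to a time shift, giving the uniform per-session average of $3W^{*}/4$ claimed in the theorem. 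A secondary technical point I will need to handle is integer rounding: $W^{*}$ and $W^{*}/2$ are integer quantities and the halving uses $\lfloor\cdot\rfloor$, so the bounds and the average hold up to an $O(1)$ correction that is absorbed into the ``asymptotically'' interpretation of steady state.
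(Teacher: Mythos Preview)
Your identification $W^* = \lfloor C/N \rfloor$ is incorrect, and this is not a cosmetic issue: it reflects a wrong picture of the steady state. You implicitly assume the $N$ sessions are synchronized, all sitting at $W^*$ simultaneously, so that the aggregate overflows precisely when each session steps to $W^*+1$. But the QMA rule (cut only the largest sessions until the aggregate fits) does not drive the system to a synchronized fixed point; it drives it to a \emph{staggered} configuration in which, at any instant, the windows are spread out rather than equal. In that configuration the peak window $W^*$ is strictly larger than $C/N$: since each session time-averages to $3W^*/4$ and the $N$ averages must sum to (approximately) $C$, one gets $W^* \approx 4C/(3N)$, not $C/N$. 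Your ``as soon as some session reaches $W^*+1$ the aggregate overflows'' step therefore does not hold as stated.

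The paper handles this by passing to a fluid model and proving two structural lemmas. Lemma~\ref{lem:bound} shows that once reached, the relation $W_1/2 \le W_N$ persists (the session just cut always becomes the new minimum). Lemma~\ref{lem:winGap} is the heart of the matter: it shows that in steady state the sorted windows form an arithmetic progression with common difference $K = W^*/(2N)$, by analyzing how the gap $W_1 - W_2$ evolves across successive cuts and showing it contracts to $K$. From this even-spacing structure the round-robin cutting, the per-session period $KN$, and the $[W^*/2, W^*]$ sawtooth all follow directly. Your symmetry/ergodicity sketch gestures at the same conclusion but does not supply the contraction argument that actually pins down the staggered steady state; without it, the claim that each session individually ranges over the full interval $[W^*/2, W^*]$ (rather than some narrower band) is unsupported.
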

We note that the above Theorem shows that all the QTP sessions share same amount of quantum memory on average, \ie, the Jain's index will be $J=1$.
\begin{theorem}\label{the:converving}
	(High quantum memory utilization) During any time slot, at most $200/3N$ percent of the quantum memory will be sitting idle.
\end{theorem}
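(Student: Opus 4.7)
The plan is to combine the sawtooth structure of individual windows from Theorem~\ref{the:fairness} with the single-halving rule of Algorithm~QMA to pin the total memory usage $S(t)=\sum_{i=1}^{N} w_i(t)$ into a narrow band around $C$, and then read off the worst-case idle fraction.

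First I would establish that $S(t)$ oscillates triangularly between $S_{\max}=C$ and $S_{\min}=C-W^*/2$. The upper envelope is immediate: Algorithm~QMA triggers a halving only when the additive-increase step in Algorithm~SWC pushes $S$ past $C$, so just before every halving $S\approx C$. For the lower envelope, Algorithm~QMA halves only the single largest request at a time, and by Theorem~\ref{the:fairness} the session being halved is exactly at $W^*$; hence the total drops by $W^*/2$ in a single halving step, giving $S_{\min}=C-W^*/2$. Between halvings every session is in CA mode and grows by $1$ per slot, so $S$ increases linearly back toward $C$.

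Next I would pin down $W^*$ by matching two expressions for the time-average of $S$. Since $S$ is triangular, its time-average is $(S_{\min}+S_{\max})/2 = C - W^*/4$. On the other hand, Theorem~\ref{the:fairness} gives that each session has per-slot average window $3W^*/4$, so the time-average of $S$ is also $\sum_i \overline{w_i}=3NW^*/4$. Equating the two yields $W^*(3N+1)/4 = C$, hence $W^*=4C/(3N+1)$. Substituting back, $S_{\min}=C-W^*/2=C(3N-1)/(3N+1)$, and the worst-case idle fraction across all time slots is $(C-S_{\min})/C = 2/(3N+1) \leq 2/(3N)$, matching the claimed bound of $200/(3N)\%$.

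The main obstacle is the discretization of the dynamics: the clean statement ``the post-halving drop equals exactly $W^*/2$'' is a continuous-time idealization that picks up $O(1)$ errors from the $\lfloor \cdot/2 \rfloor$ in Algorithm~SWC and the integer $W\leftarrow W+1$ increments. I would absorb these rounding errors into the slack between the exact $2/(3N+1)$ and the weaker $2/(3N)$ stated in the theorem. A secondary subtlety is that Algorithm~QMA may halve more than one session in the same slot if a single halving cannot clear the overflow; this regime would require $W^*/2<N$, and a short case analysis shows that the bound $2/(3N)$ still holds by summing the per-session drops, each at most $W^*/2$.
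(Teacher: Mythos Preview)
Your argument is correct and follows the same skeleton as the paper's proof: both express the worst-case idle memory as $W^*/2$ (the drop when the currently-largest window is halved) and then bound $W^*$ via the per-session average $3W^*/4$ from Theorem~\ref{the:fairness}. The only real difference is how $W^*$ is pinned down. The paper simply uses the congestion-free guarantee $\overline{S}=3NW^*/4\le C$ from Theorem~\ref{the:congestion} to get $W^*\le 4C/(3N)$ and hence idle $\le W^*/2\le 2C/(3N)$. You instead compute $\overline{S}$ exactly from the sawtooth of the aggregate, $\overline{S}=C-W^*/4$, and equate to obtain the sharper $W^*=4C/(3N+1)$ before relaxing to the stated bound. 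Your route buys a tighter constant and makes explicit why the idle amount is exactly $W^*/2$ (which the paper asserts without justification); the paper's route is shorter and avoids needing $S_{\max}=C$ to hold with equality. Your discussion of discretization and multi-halving goes beyond what the paper addresses, since the paper works purely in the fluid model.
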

We note that when the number of QTP sessions approaches infinity, we have $\lim_{N\to \infty}\frac{200}{3N} = 0$. In other words, the QTPs proposed in this work can fully utilize the quantum memory at the network bottlenecks. 

To validate the correctness of our analysis, we also conducted simulations in Appendix~\ref{app:validation}.

\section{Performance Evaluation}\label{sec:simu}
In this section, we first evaluate the performance of the proposed \tele in a \tpqdn and \plain in a \mhqdn through extensive simulations. 
We then compare the performance  between \tele in a \tpqdn and \plain in a \shqdn in order to seek additional insights on some of the design tradeoffs involved in these two types of networks.

\subsection{Performance in Wide-area Networks}\label{subsec:macro}
In this section, we conduct extensive simulations to show the performance of proposed QTPs. 
Since \shqdn may not be suitable for a wide-area network,  we will focus on \tpqdn and \mhqdn, both of which use some kind of intermediate quantum nodes, and evaluate the performance of \tele, its two variations (EW and FRA), and \plain.  For the time being, we assume that due to the use of trusted relays in a \mhqdn, every sharing can be successfully delivered to the next node with probability 1.

To generate a network topology to simulate, we randomly place a given number of repeaters (or trusted relays) into a square area.
Edges among repeaters are determined following the Waxman model~\cite{Waxman} by assuming an  average node degree of 4. Each node in the network has a total of 1000 units of quantum memory. In a \tpqdn, each quantum computer reserves 2/3 of the quantum memory (\ie, 667 units) for sending data qubits, while the remaining 1/3 are reserved for receiving data qubits. In a \mhqdn, the split is 9/13 vs. 4/13.

Without loss of generality,  we implement a rudimentary load balance routing algorithm~\cite{loadbalance} to calculate the path for each QTP session. 
To investigate the performance of different QTPs, we record the sending window size of every session during each time slot, utilization of quantum memory 
at every node, and overall network throughput.

In a \mhqdn, since the sending window size of a given "end-to-end" session (Alice to Bob) may be different on every hop,
we only consider the \emph{effective window size}, which is defined as the minimum window size along its path. Typically, a larger sending window size would lead to a higher quantum memory utilization as well as overall throughput, which for a given session, is calculated based on the total number of data qubits received by Bob over a period of time.



\noindent\textbf{Effect of network size:}
At first, we investigate how \tele and \plain perform in networks of different sizes. To this end, we vary the number of repeaters (or trusted relays) in a network from 40 to 70 (in increment of 10) while injecting 100 QTP sessions into these networks of different sizes. 
Each simulation runs for 200 time slots.

Figs.~\ref{fig:sizeWin}--\ref{fig:throuNode} show, respectively, simulation results on how the average sending window size, quantum memory utilization, and network throughput will change with the network size. 

\begin{figure*}
	\subfigure[\tpqdn with \tele.]{\label{subfig:nodeWinTele}
		\includegraphics[width=0.23\linewidth]{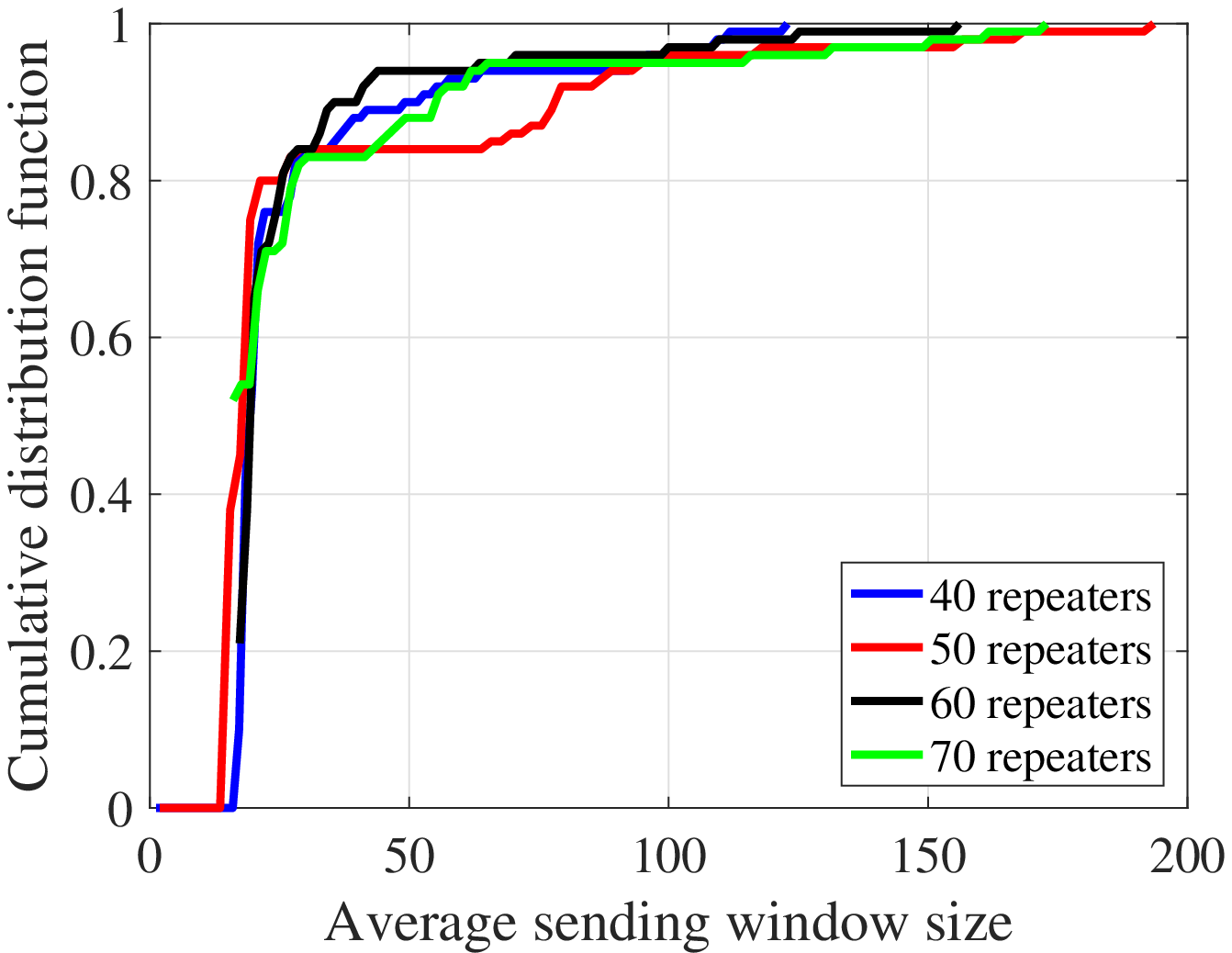}
	}
	\subfigure[\tpqdn with EW.]{\label{subfig:nodeWinExplict}
		\includegraphics[width=0.23\linewidth]{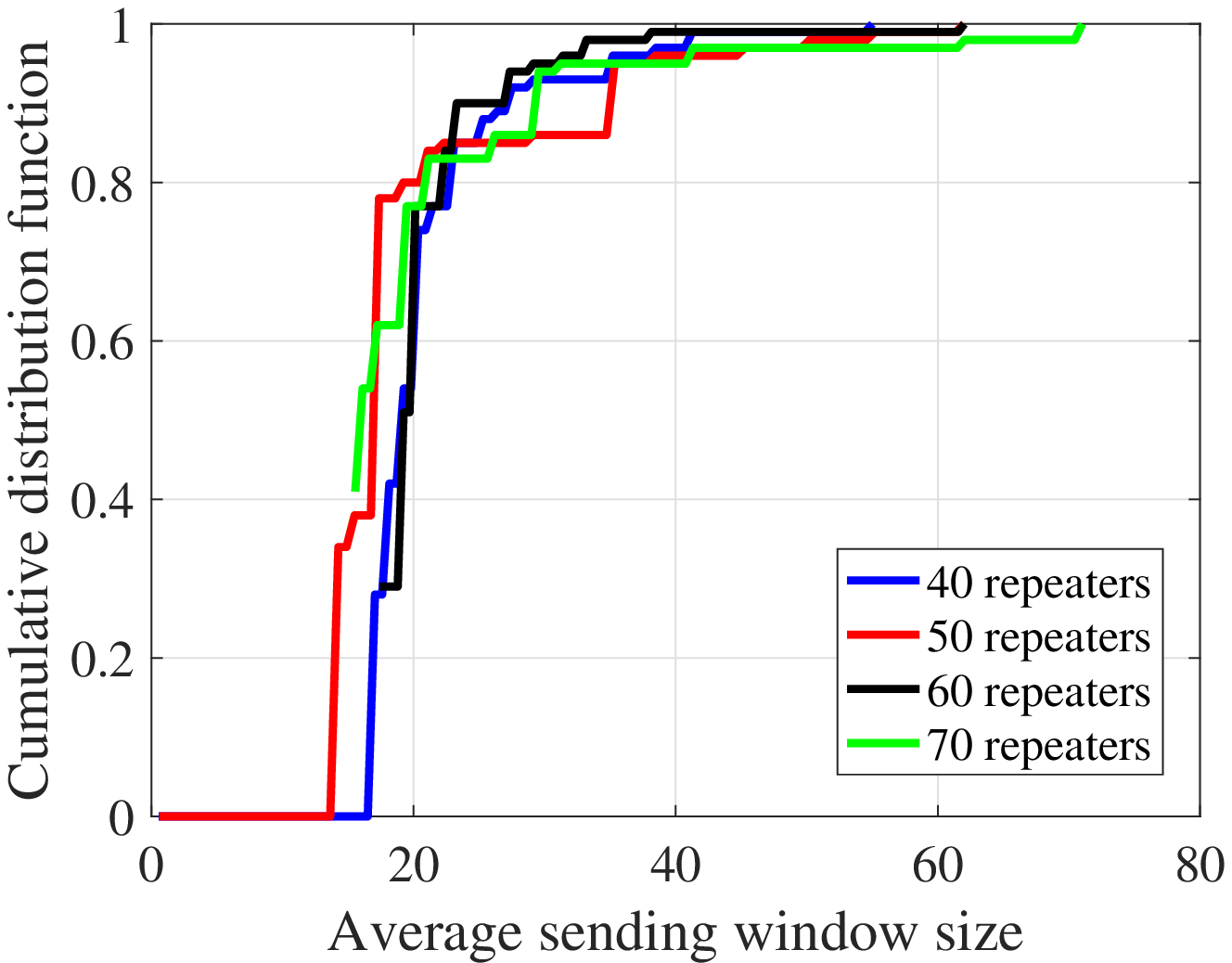}
	}
	\subfigure[\tpqdn with FRA.]{\label{subfig:nodeWinFair}
		\includegraphics[width=0.23\linewidth]{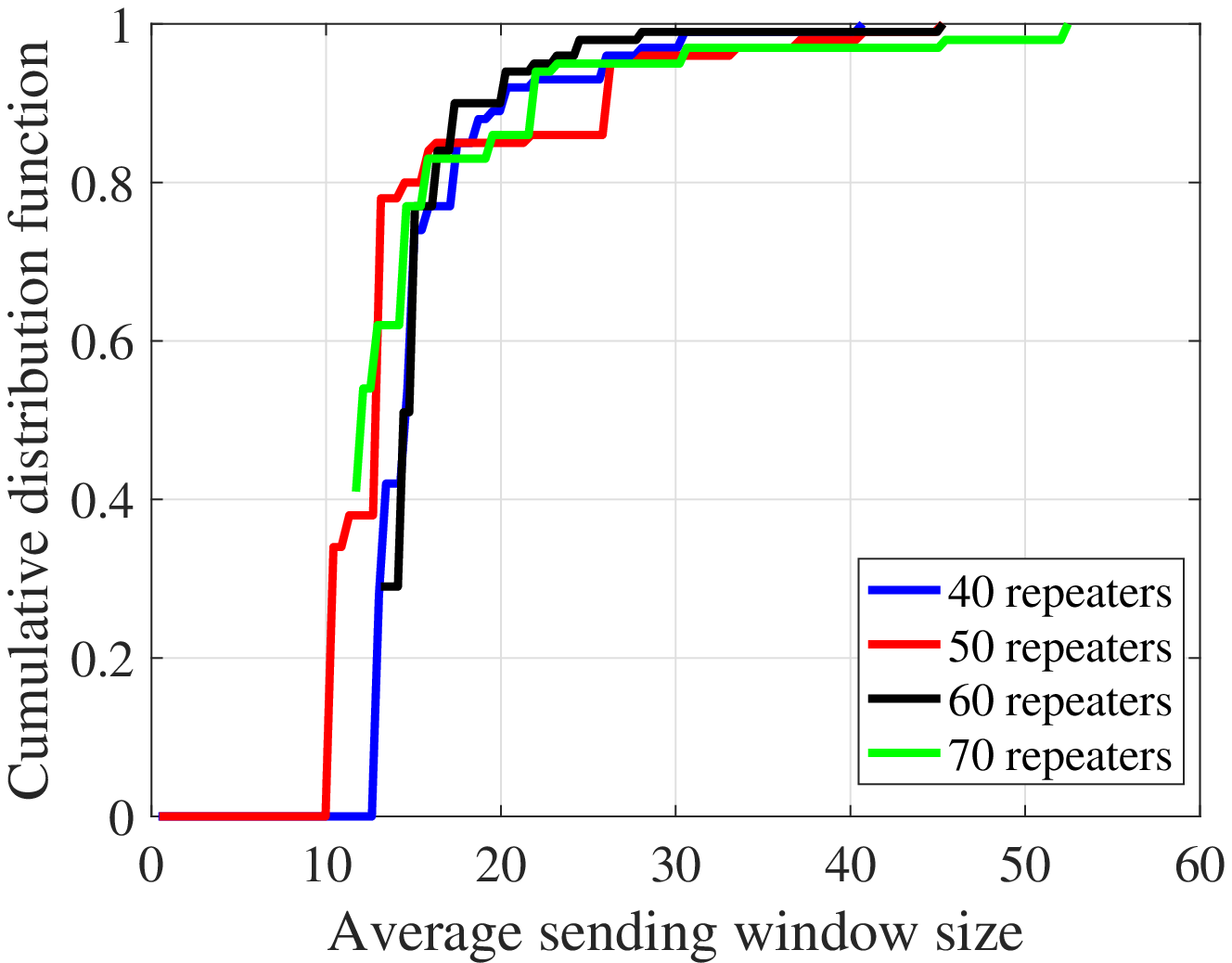}
	}
	\subfigure[\mhqdn with \plain.]{\label{subfig:nodeWinTR}
		\includegraphics[width=0.23\linewidth]{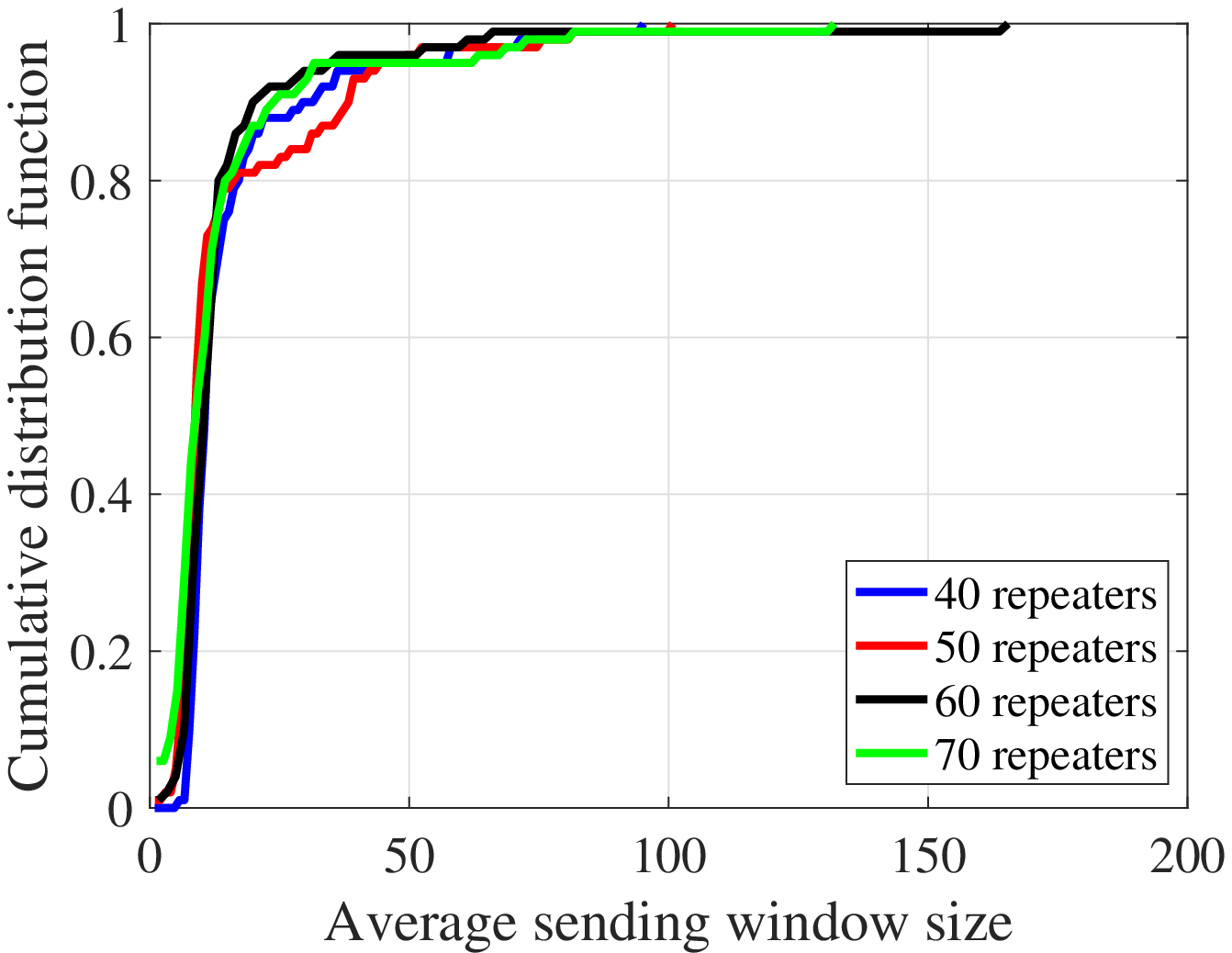}
	}
	\vspace{-0.1in}
	\caption{How network size impacts average sending window sizes.}\label{fig:sizeWin}
	\vspace{-0.2in}
\end{figure*}

As can be seen from Fig.~\ref{fig:sizeWin}, the average sending window size does not vary too much with the network size. On the one hand, a smaller network means a fewer hops between Alice and Bob, potentially leading to a larger sending window. On the other hand, more QTP sessions may share a hop (or path), potentially leading to more competition for limited quantum memory resources and this results in a smaller sending windows. Such a tradeoff means that there is no clear correlation between the network size and sending window size.

It is interesting to note, however, that sessions can achieve the largest sending window size under \tele, while FRA results in the smallest average sending window sizes. To see why, consider two sessions, $S1$ and $S2$ which go through the same repeater $R1$ with $2C$ units of quantum memory (thus a capacity of $C$ data qubits) throughout the lifetime of $S1$. Further assume that $S2$ shares a bottleneck node $R2$ with capacity $C$ with ten other sessions throughout its lifetime. In EW, although the sending window size of $S2$ will never be more than $C/10$ due to the bottleneck at $R2$, the sending window size of $S1$ will always be limited to $C/2$. However, in \tele, $S1$ may grow its sending window size up to $9C/10$ (as long as there is no other bottleneck for $S1$). The only potential downside in \tele is that when a CE flag is set, a sending window size $W$ will get reduced by half, even though there may be a few ($< W/2$) units of quantum memory are available which may get wasted. For similar reasons, FRA results in the smallest average sending window size as it combines the worst of EW and \tele.

It can also be seen by comparing Fig.~\ref{subfig:nodeWinTR} with other subfigures that the effective window size in \mhqdn is even lower than that using FRA. This is mainly due to the following two facts: (i). For each \plain session, different hops along the path lack coordination in that while the sending window for the session may be high on a previous hop, it is low on a later hop, and the effective window size is limited by the smallest one along the path; and  (ii). The amount of quantum memory used for receiving purposes will limit the maximum window size. In a \mhqdn, only 4/13 of the quantum memory is used for receiving (sharings), while 1/3 of the quantum memory are used for receiving (data qubits) in a \tpqdn. 

In addition, we observe that although some of the QTP sessions may be routed through a lightly loaded path and achieve a very large average sending window, most of the sessions achieve the similar average sending window size. This indicates that all these protocols has a good fairness performance. For example, if we ignore 10\% of the sessions with the largest average sending window size (because they are routed to a lightly loaded path), the Jain's fairness index with all four QTP cases are 0.9262, 0.9851, 0.9857, and 0.8784, respectively, when there are 40 repeaters.  

\begin{figure*}
		\subfigure[\tpqdn with \tele.]{\label{subfig:nodeUtiTele}
		\includegraphics[width=0.23\linewidth]{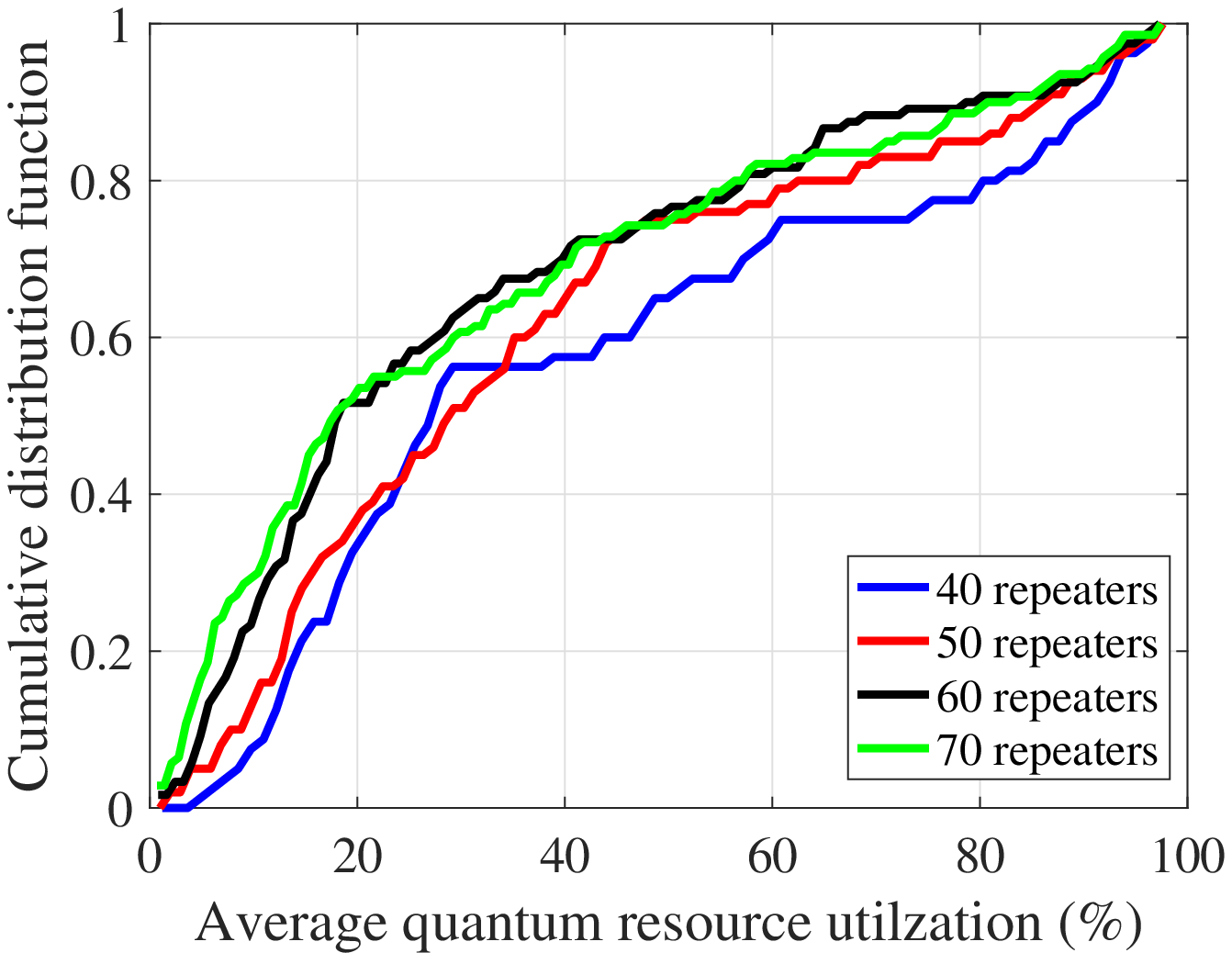}
	}
		\subfigure[\tpqdn with EW.]{\label{subfig:nodeUtiExplict}
		\includegraphics[width=0.23\linewidth]{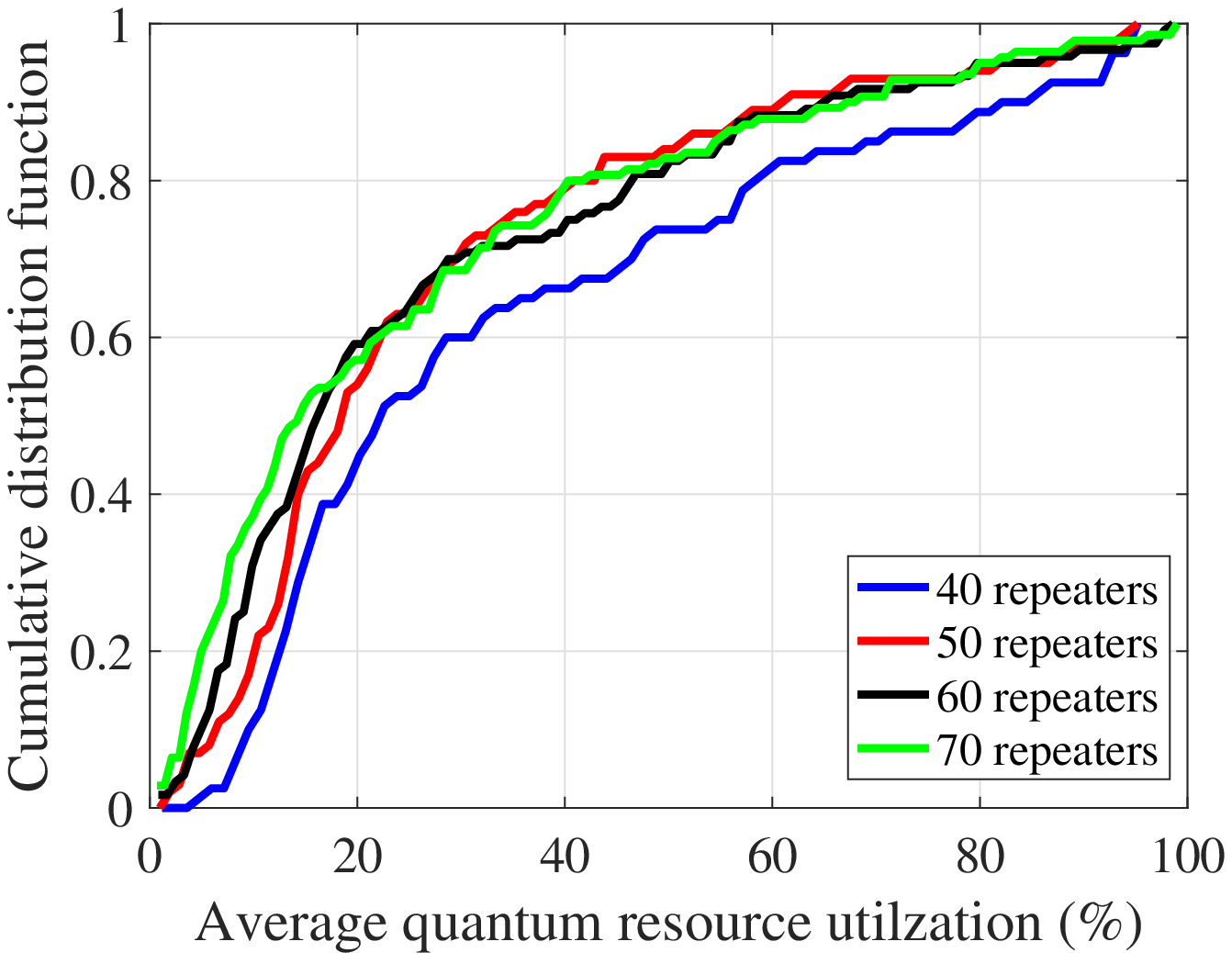}
	}
	\subfigure[\tpqdn with FRA.]{\label{subfig:nodeUtiFair}
		\includegraphics[width=0.23\linewidth]{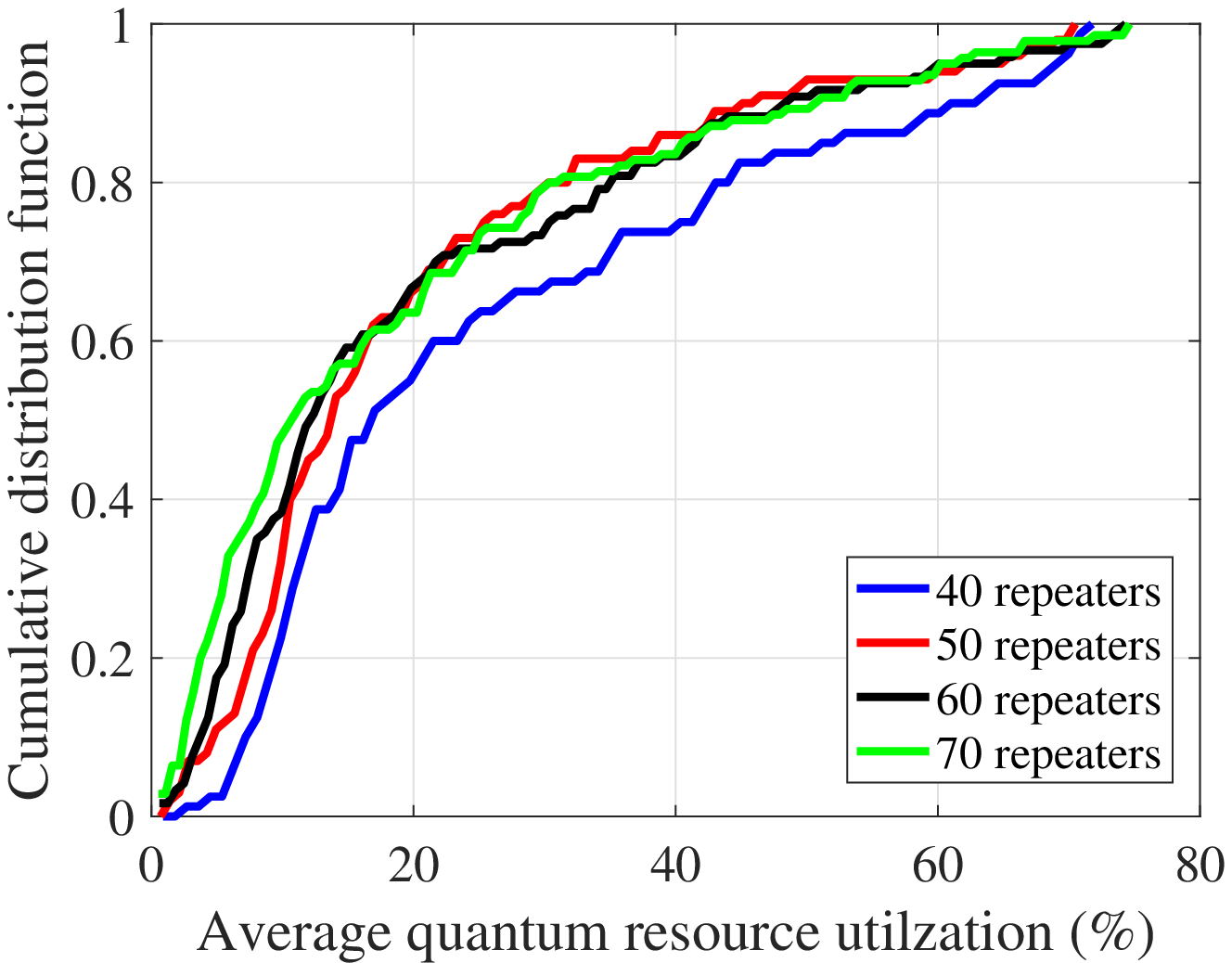}
	}
	\subfigure[\mhqdn with \plain.]{\label{subfig:nodeUtiTR}
		\includegraphics[width=0.23\linewidth]{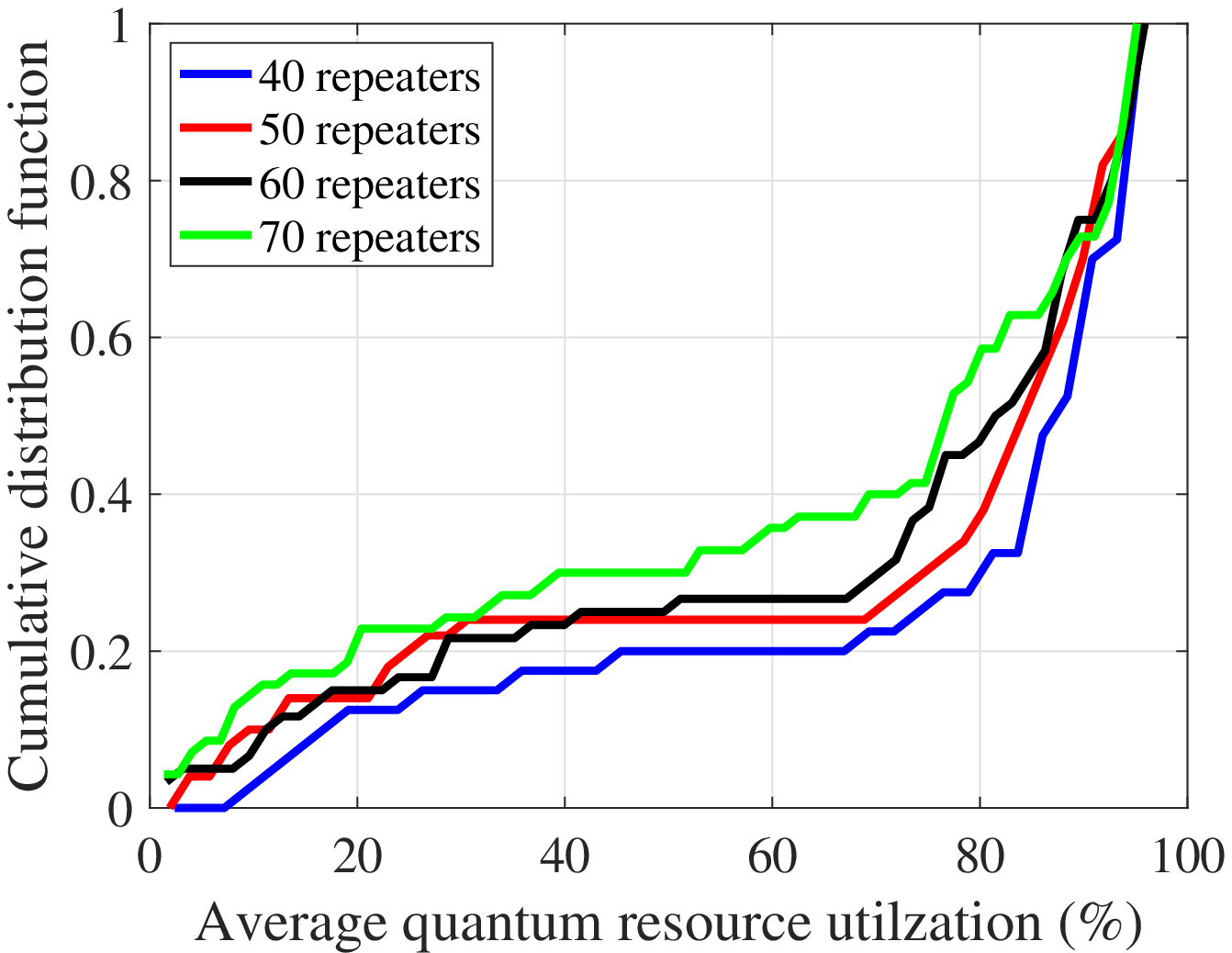}
	}
	\vspace{-0.1in}
	\caption{How network size impacts quantum memory utilization.}\label{fig:sizeUtilization}
	\vspace{-0.2in}
\end{figure*}

We observe that from Fig.~\ref{fig:sizeWin}, regardless of which QTP is adopted, a smaller network size tends to results in a higher quantum memory utilization. This is very intuitive since each node in a smaller network will carry  more sessions. Nevertheless, we should also note that the number of nodes whose quantum memory utilization is close to 1 does not significantly increase in a small network. 
This is because the QTP sessions tend to use nodes with a lower quantum memory utilization under a load balanced routing protocol.

Comparing all four subfigures in Fig.~\ref{fig:sizeUtilization}, we can see that as expected, \tele leads to a higher quantum memory utilization than EW, and FRA results in the lowest average quantum memory utilization among the three, for the same reason as that discussed for the case about the sending window size. However, it may come as a surprise that \plain in a \mhqdn leads to the highest quantum memory utilization (even though it has the lowest effective sending window size). \plain achieves this because  every relay  can maximize its own resource utilization without having to worry about any potential bottleneck along the end-to-end path from Alice to Bob. However, as we will see in Fig.~\ref{fig:throuNode}, despite this, \plain in a \mhqdn cannot achieve a good throughput. While one of the reasons is that data qubits have to go through multiple hops,
 we will discuss the other major reason below.


\begin{figure}
	\begin{minipage}{0.48\linewidth}
		\centering
		\includegraphics[width=0.85\linewidth]{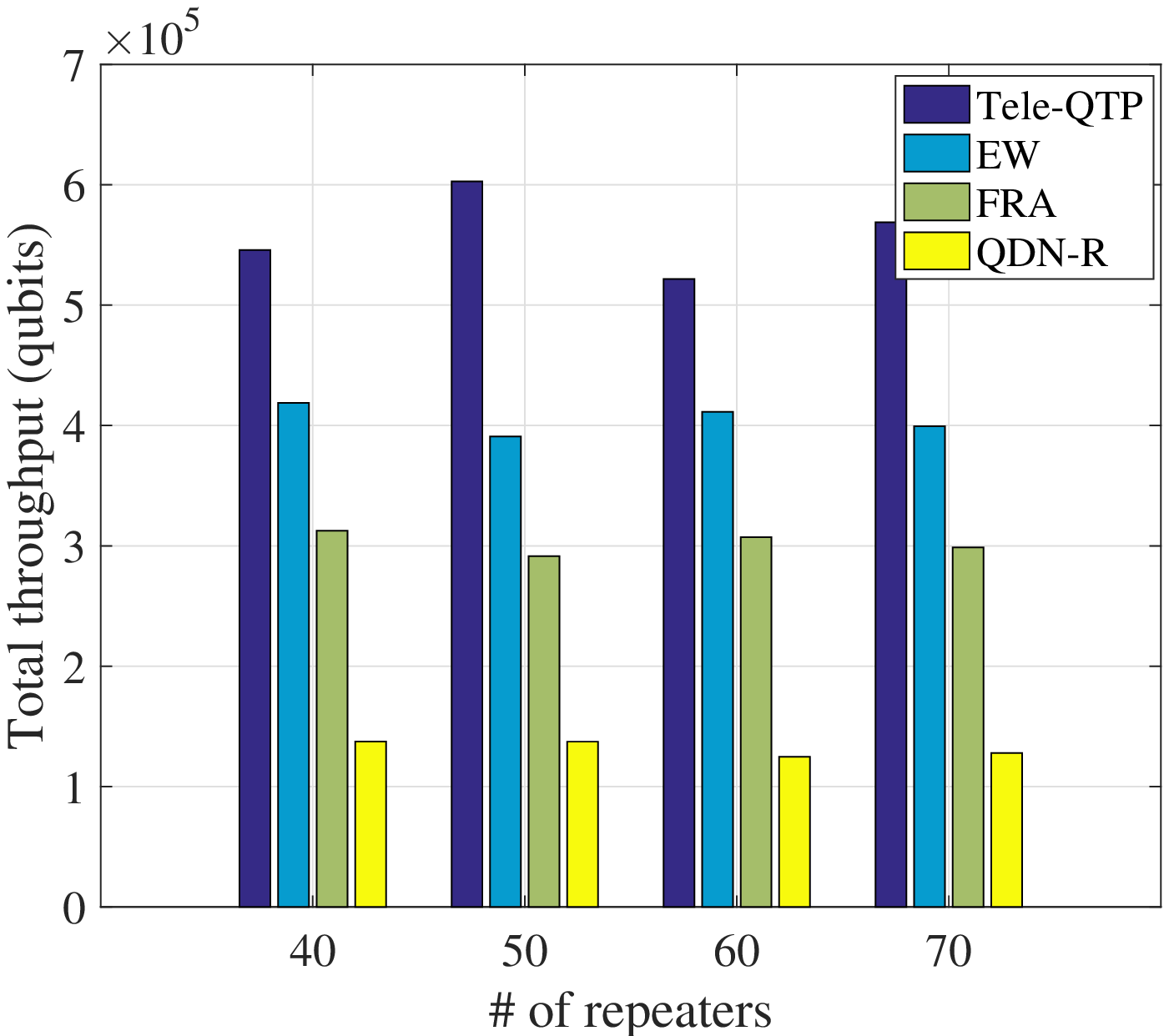}
		\vspace{-0.1in}
		\caption{Network size vs. throughput.}\label{fig:throuNode}
	\end{minipage}
	\begin{minipage}{0.48\linewidth}
		\centering
		\includegraphics[width=0.85\linewidth]{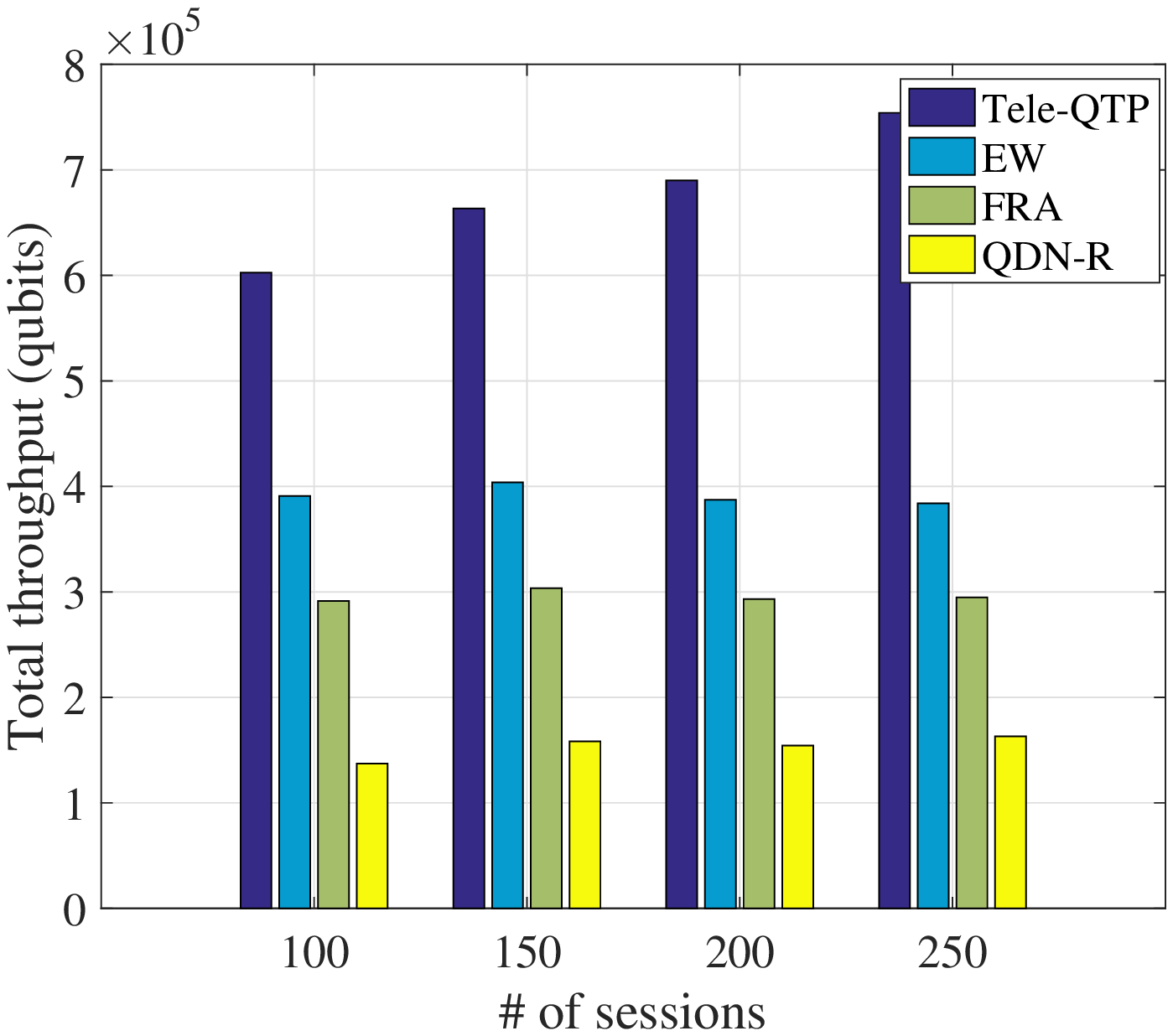}
		\vspace{-0.1in}
		\caption{Session number vs. throughput.}\label{fig:throuConn}
	\end{minipage}
	\vspace{-0.2in}
\end{figure}
From Fig.~\ref{fig:throuNode}, we observe that throughput does not have a clear correlation with the network size, for the same reason discussed for the case about the sending window size. However, it is worth noting that regardless of the network size, \tele achieves the largest throughput. When there are 50 repeaters in the network for example, \tele significantly outperforms EW, FRA, and \plain by 54.22\% and 106.84\%, and 290.12\%, respectively. Besides the fact that \mhqdn is a multi-hop network, we believe the main reason for the extremely poor throughput performance of \plain is the assumption that we need to transmit 2 sharings for each data qubit, in order to achieve reliable delivery.


\noindent\textbf{Effect of workload:}
To investigate the effect of workload on the performance of proposed QTPs, we assume that there are 50 repeaters, and the average node degree among repeaters is 4. In such a network, we vary the number of QTP sessions from 150 to 250 (in increment of 50) injected into the network, and collect the performance of different QTPs. 

From Fig.~\ref{fig:throuConn}, we can see that the network throughput increases with the number of sessions  when \tele is adopted, indicating that \tele can utilize the  quantum memory quite effectively. However, the throughput achieved by other three QTPs is capped low. Worse,  more sessions may even reduce the throughput in EW or FRA (\eg  with 200 vs 150 sessions) due to their inefficiencies in allocating quantum memory (and managing the sending window size). Our simulations show that \tele outperforms EW, FRA, and \plain by 96.37\%, 155.85\% and 310.94\%, respectively, when there are 250 QTP sessions in the network.

\begin{figure*}
	\subfigure[\tpqdn with \tele.]{\label{subfig:conWinTele}
		\includegraphics[width=0.23\linewidth]{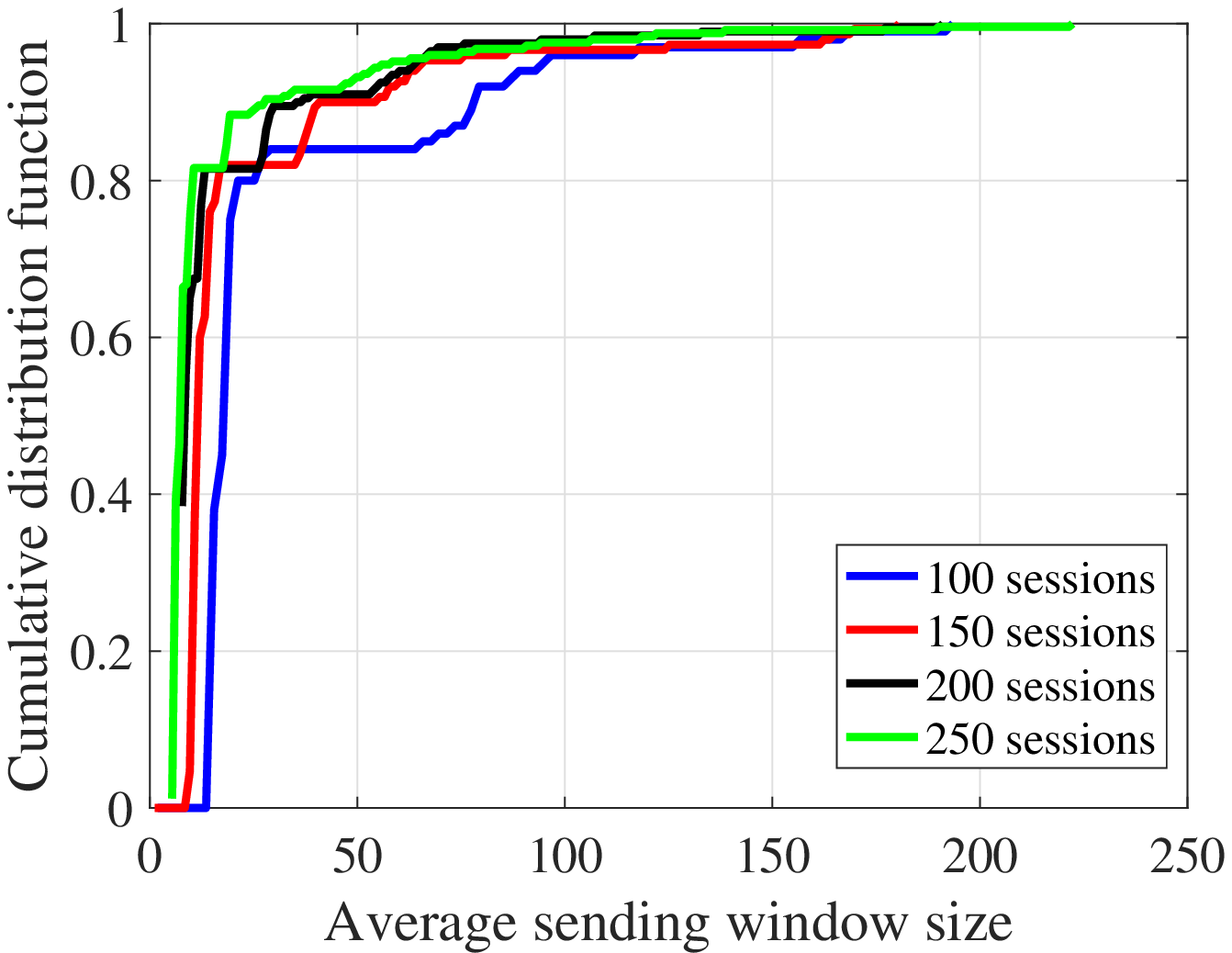}
	}
	\subfigure[\tpqdn with EW.]{\label{subfig:conWinExplict}
		\includegraphics[width=0.23\linewidth]{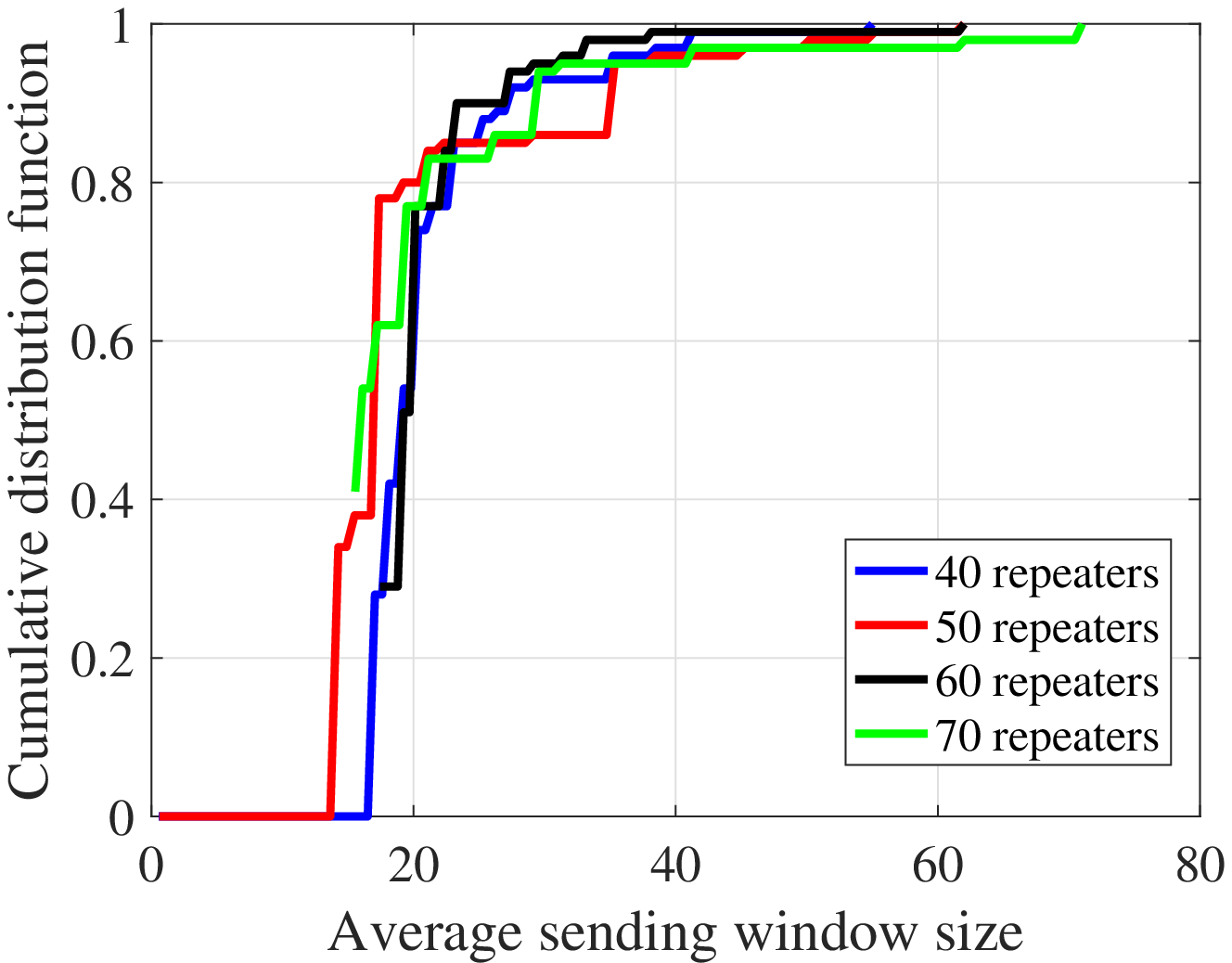}
	}
	\subfigure[\tpqdn with FRA.]{\label{subfig:conWinFair}
		\includegraphics[width=0.23\linewidth]{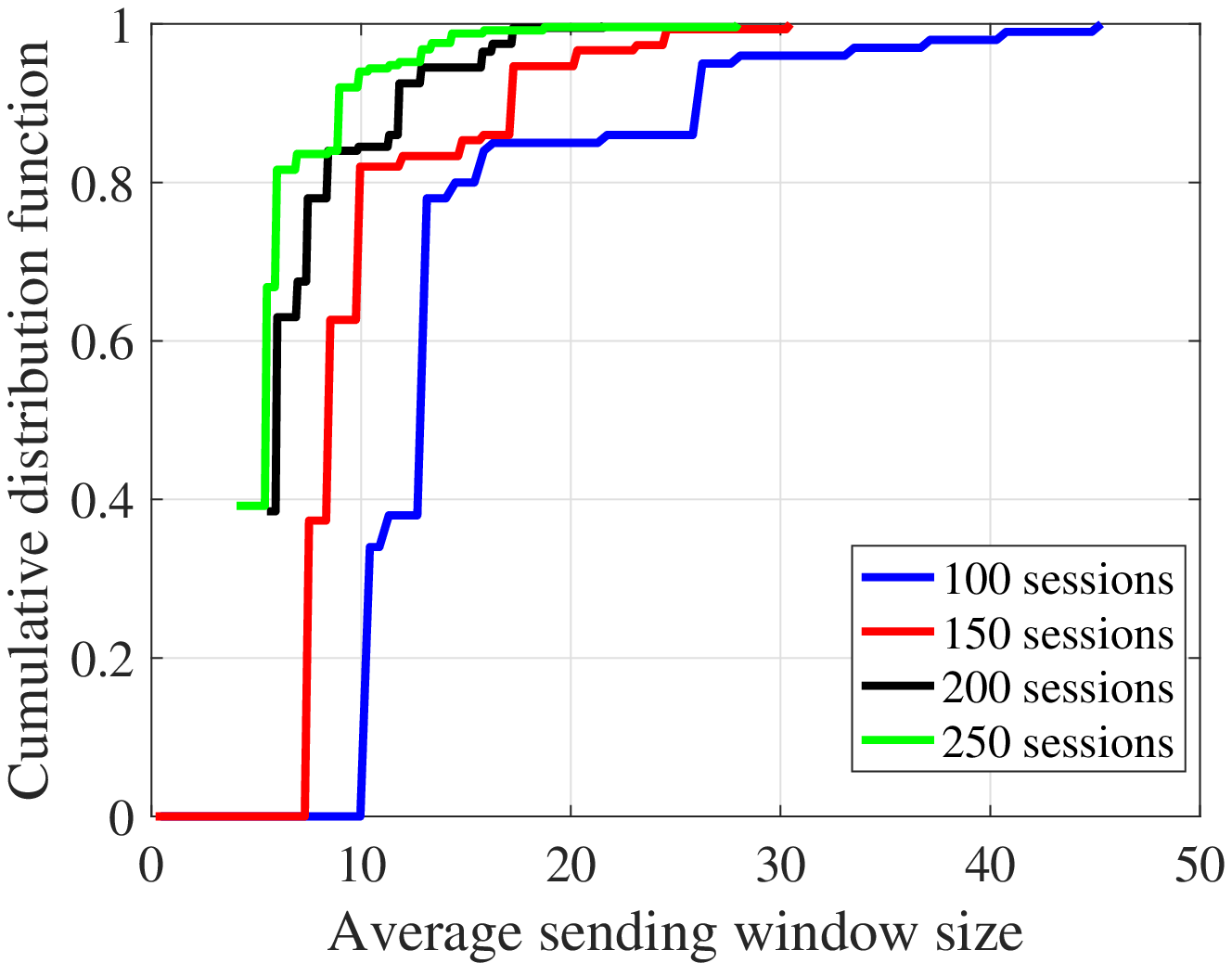}
	}
	\subfigure[\mhqdn with \plain.]{\label{subfig:conWinTR}
		\includegraphics[width=0.23\linewidth]{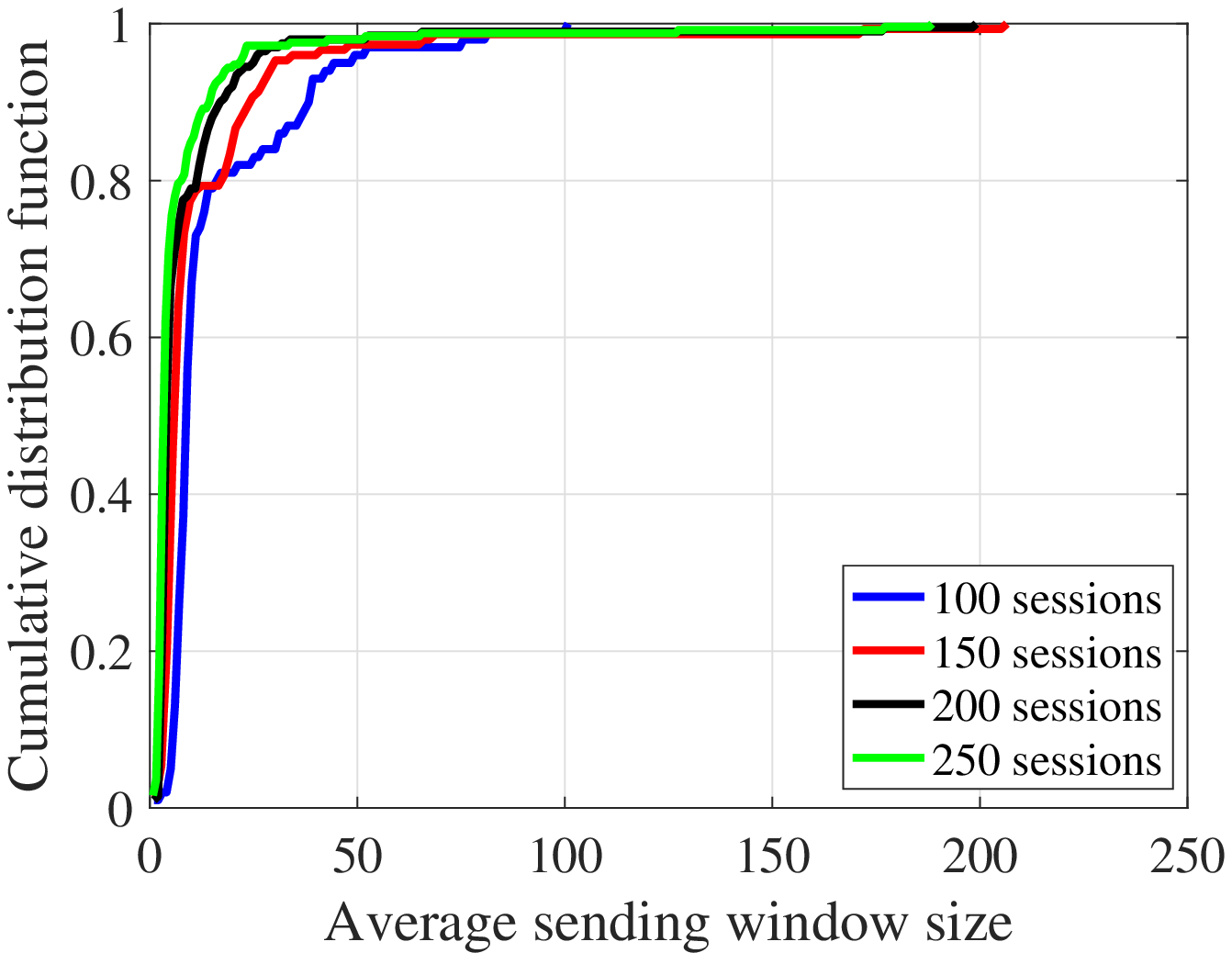}
	}
	\vspace{-0.1in}
	\caption{How number of sessions impacts sending window sizes.}\label{fig:conWin}
	\vspace{-0.2in}
\end{figure*}

How the sending window size varies with the workload is presented in Fig.~\ref{fig:conWin}. As expected, more sessions in a network result in a smaller average sending window size as more sessions are sharing the same amount of quantum memory. From this figure, we can also observe that no matter which QTP is adopted, most of the sessions will be able to achieve a similar average window size. For example, if we consider the bottom 90\% of the total 250 sessions (in terms of sending window sizes), the Jain's fairness index values achieved by the four QTPs are 0.9993, 0.9598, 1.0000, and 0.9300, respectively. In addition, as in the case with a varying network size , \tele achieves the largest average window sizes among the three that use \tp for quantum data exchanges, since any \tele session can use the idle memory without considering if it obtains the amount of memory above average.

\begin{figure*}
	\subfigure[\tpqdn with \tele.]{\label{subfig:conUtiTele}
		\includegraphics[width=0.23\linewidth]{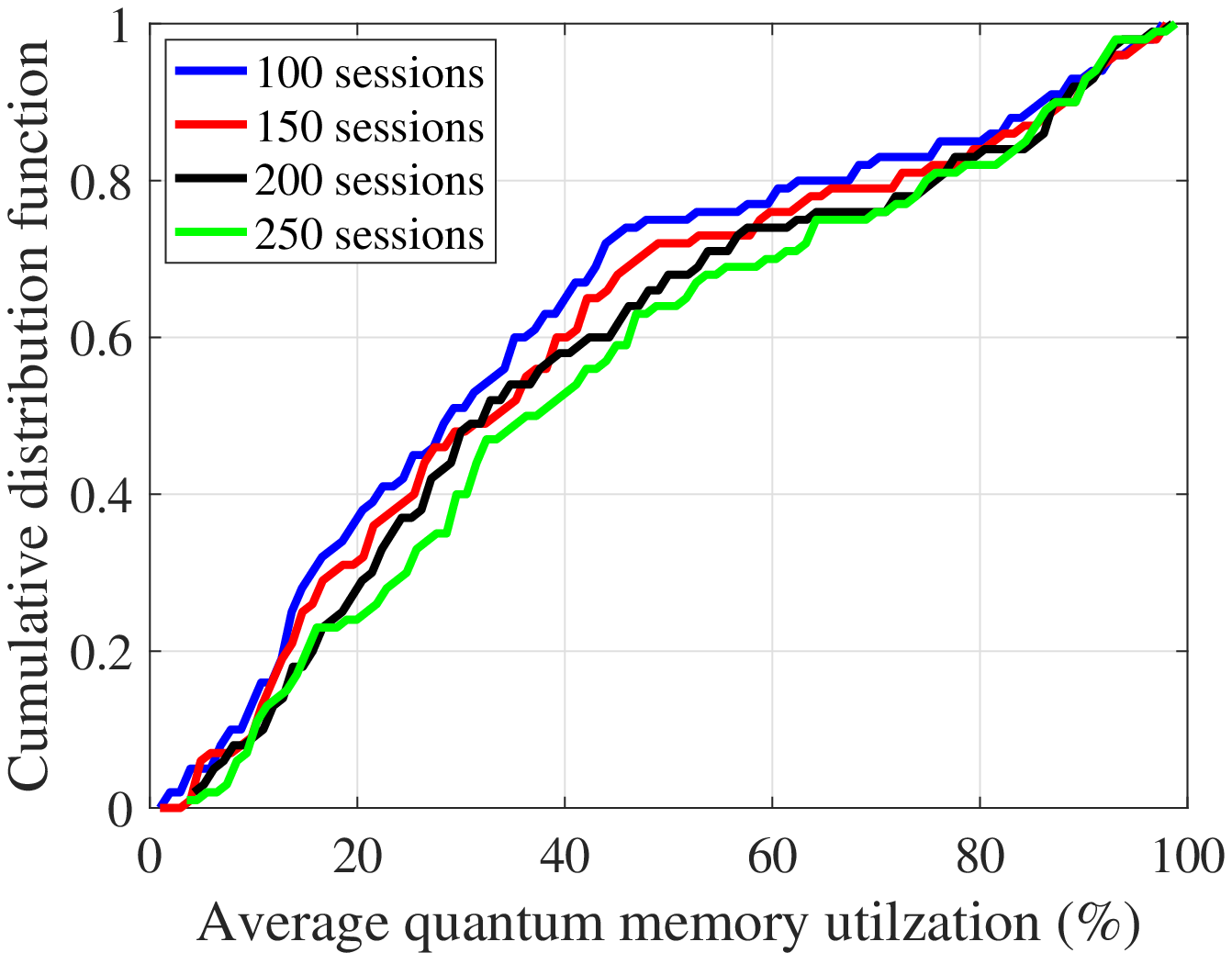}
	}
	\subfigure[\tpqdn with EW.]{\label{subfig:conUtExplict}
		\includegraphics[width=0.23\linewidth]{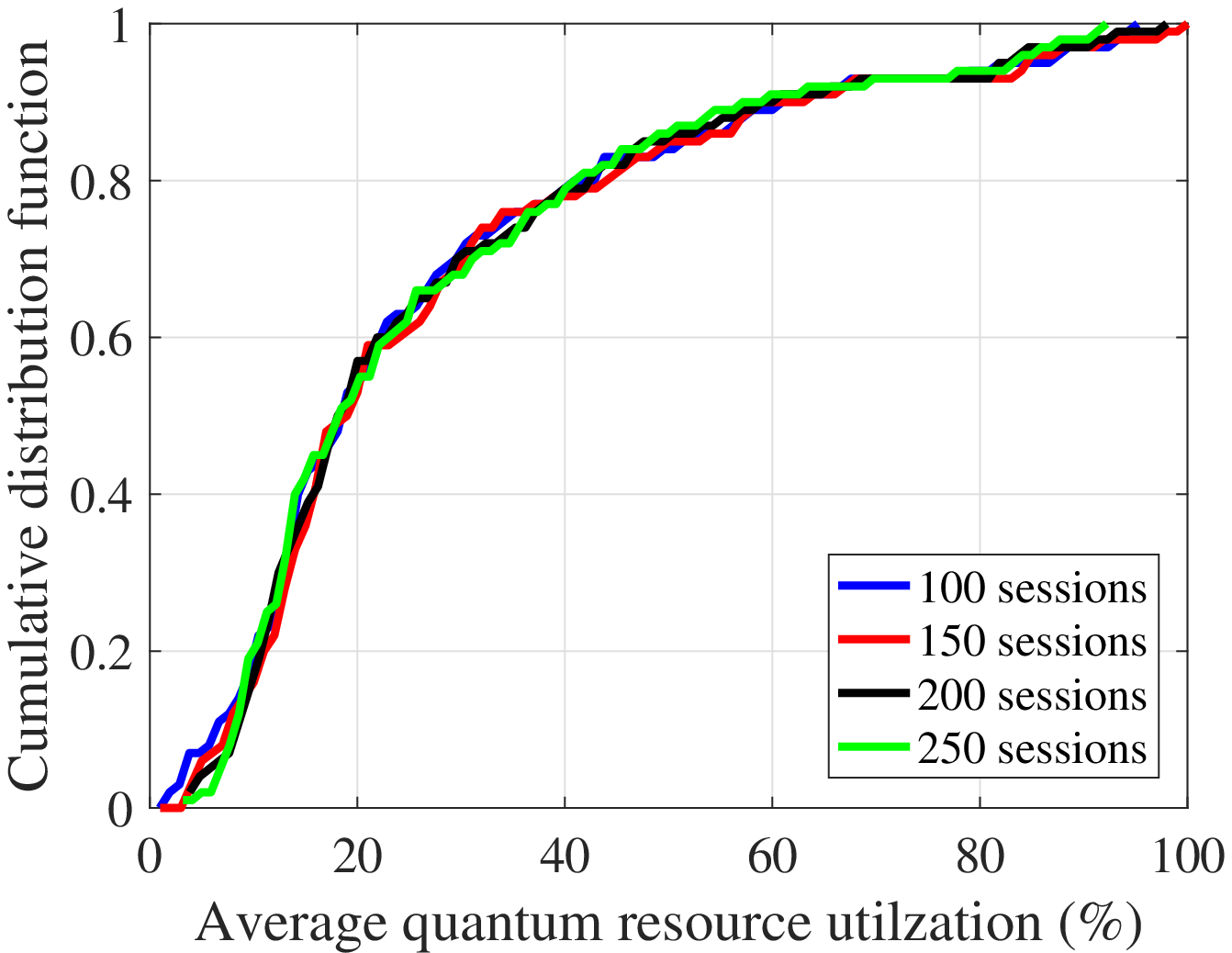}
	}
	\subfigure[\tpqdn with FRA.]{\label{subfig:conUtiFair}
		\includegraphics[width=0.23\linewidth]{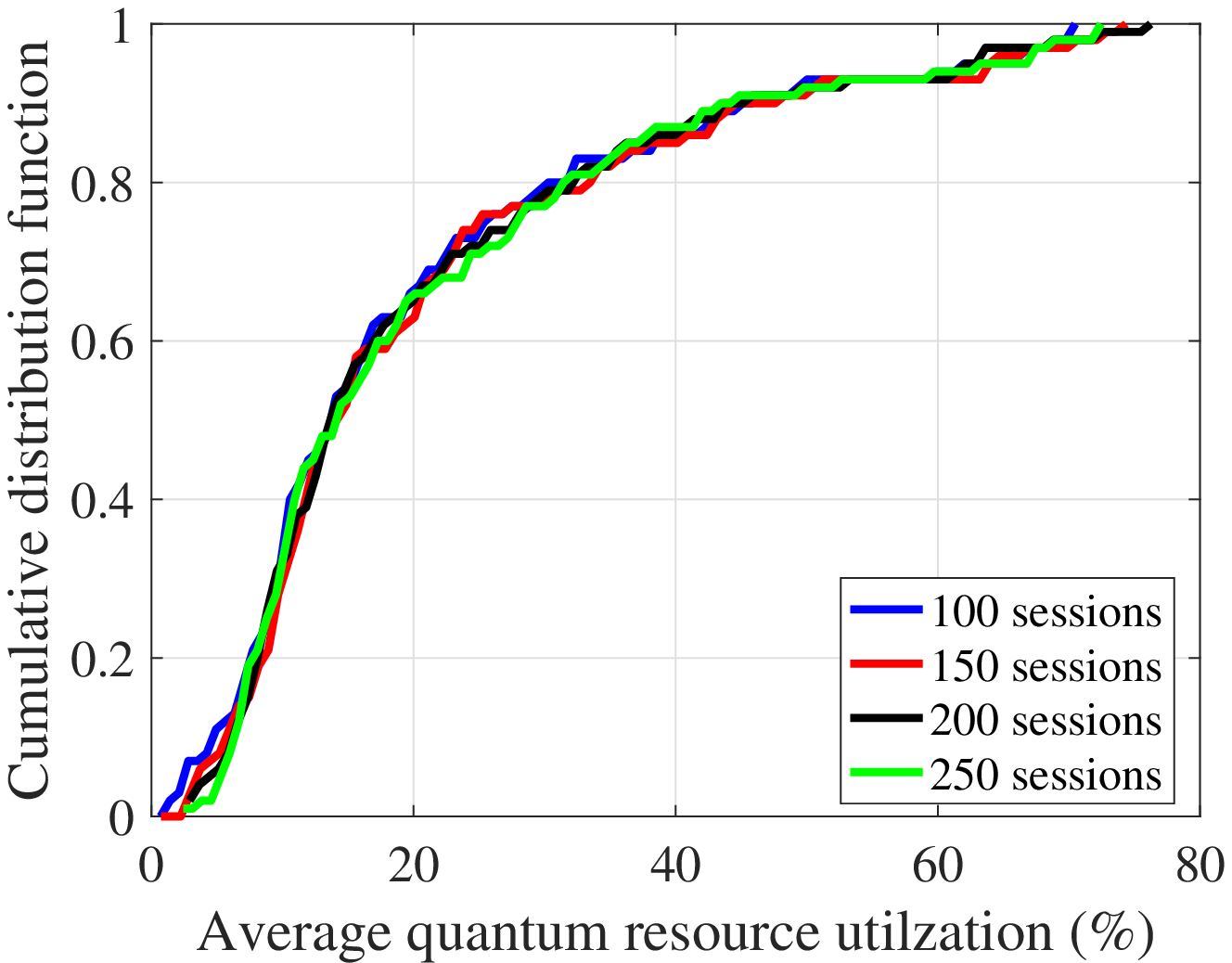}
	}
	\subfigure[\mhqdn with \plain.]{\label{subfig:conUtiTR}
		\includegraphics[width=0.23\linewidth]{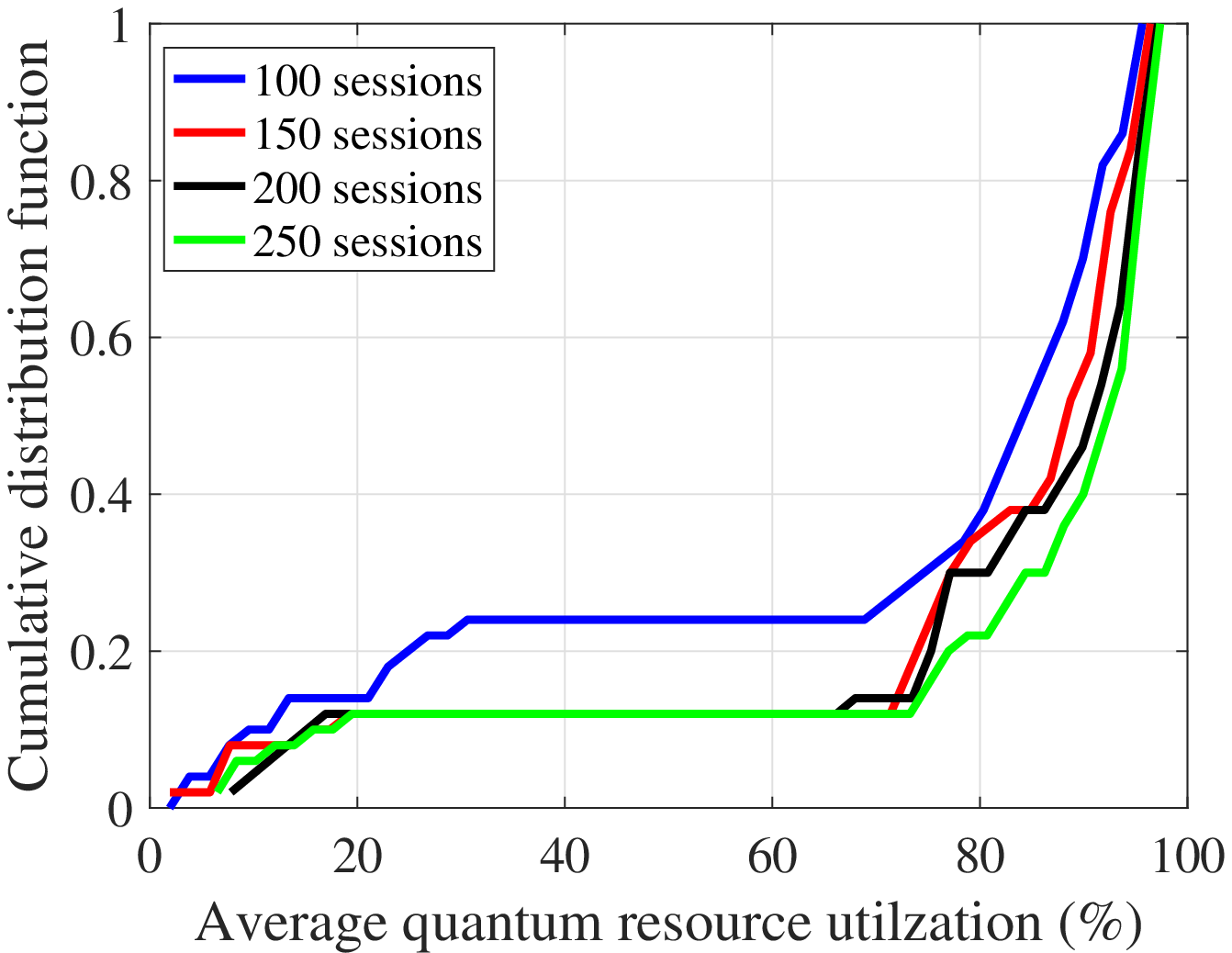}
	}
	\vspace{-0.1in}
	\caption{How number of sessions impacts quantum resource utilization.}\label{fig:conUtilization}
	\vspace{-0.2in}
\end{figure*}

Normally, we would expect that the quantum memory utilization will increase with the number of sessions in the network. 
From Fig.~\ref{fig:conUtilization}, we see that this is true in \tele and \plain. However, with either EW or FRA, when a larger number of sessions going through a node, the maximal sending window size of each session becomes smaller. Due to the presence of bottleneck and inefficiency in their ways of allocating quantum memory among multiple competing sessions, the quantum memory utilization in EW and FRA won't necessarily increase.



\subsection{Tradeoffs between \tpqdn and \shqdn}\label{subsec:tradeoff}
\begin{figure}
	\subfigure[Impact of sending success probability in \shqdn.]{\label{subfig:prob}
		\includegraphics[width=0.45\linewidth]{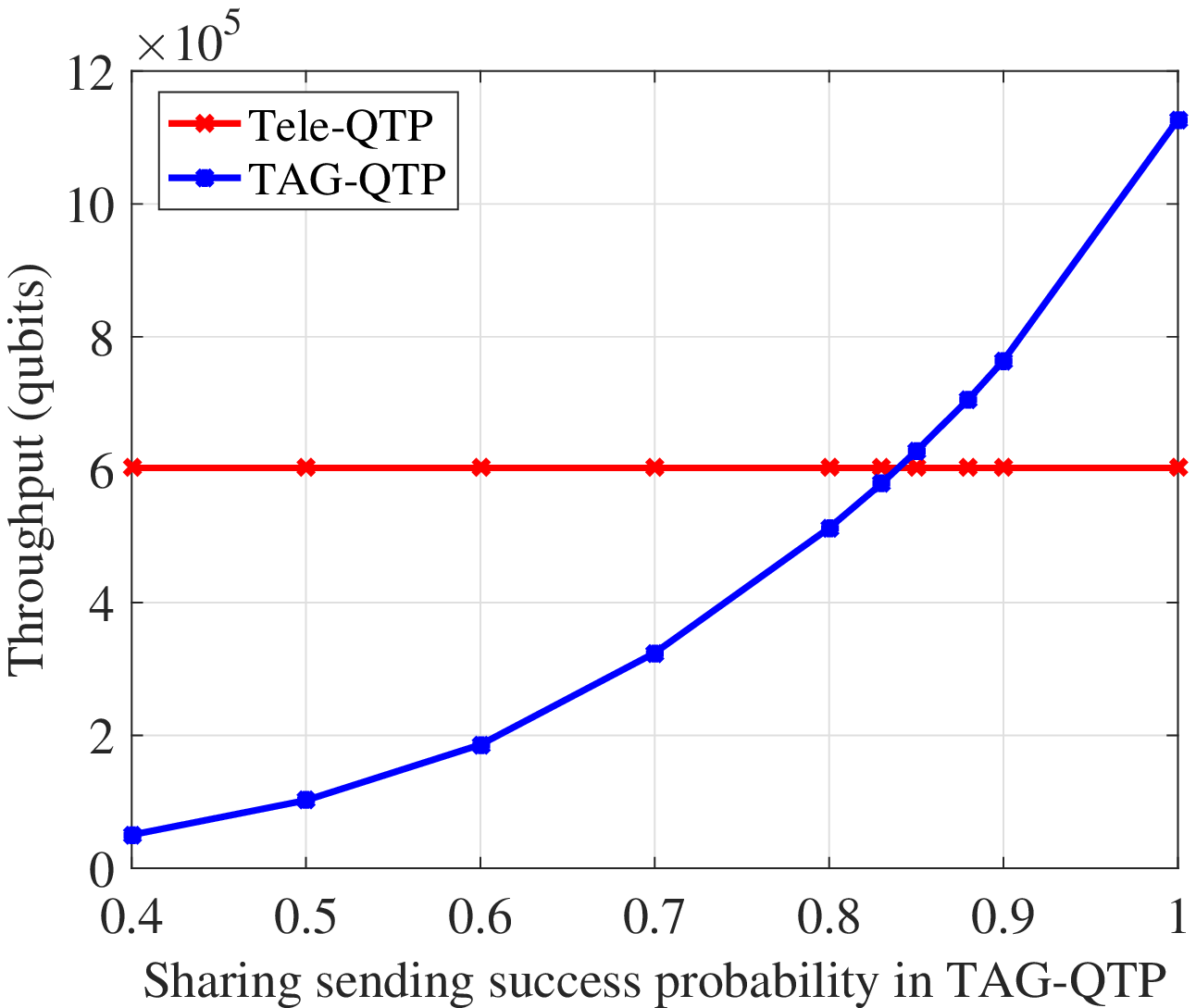}
	}
	\subfigure[Impact of time slot length.]{\label{subfig:slot}
		\includegraphics[width=0.45\linewidth]{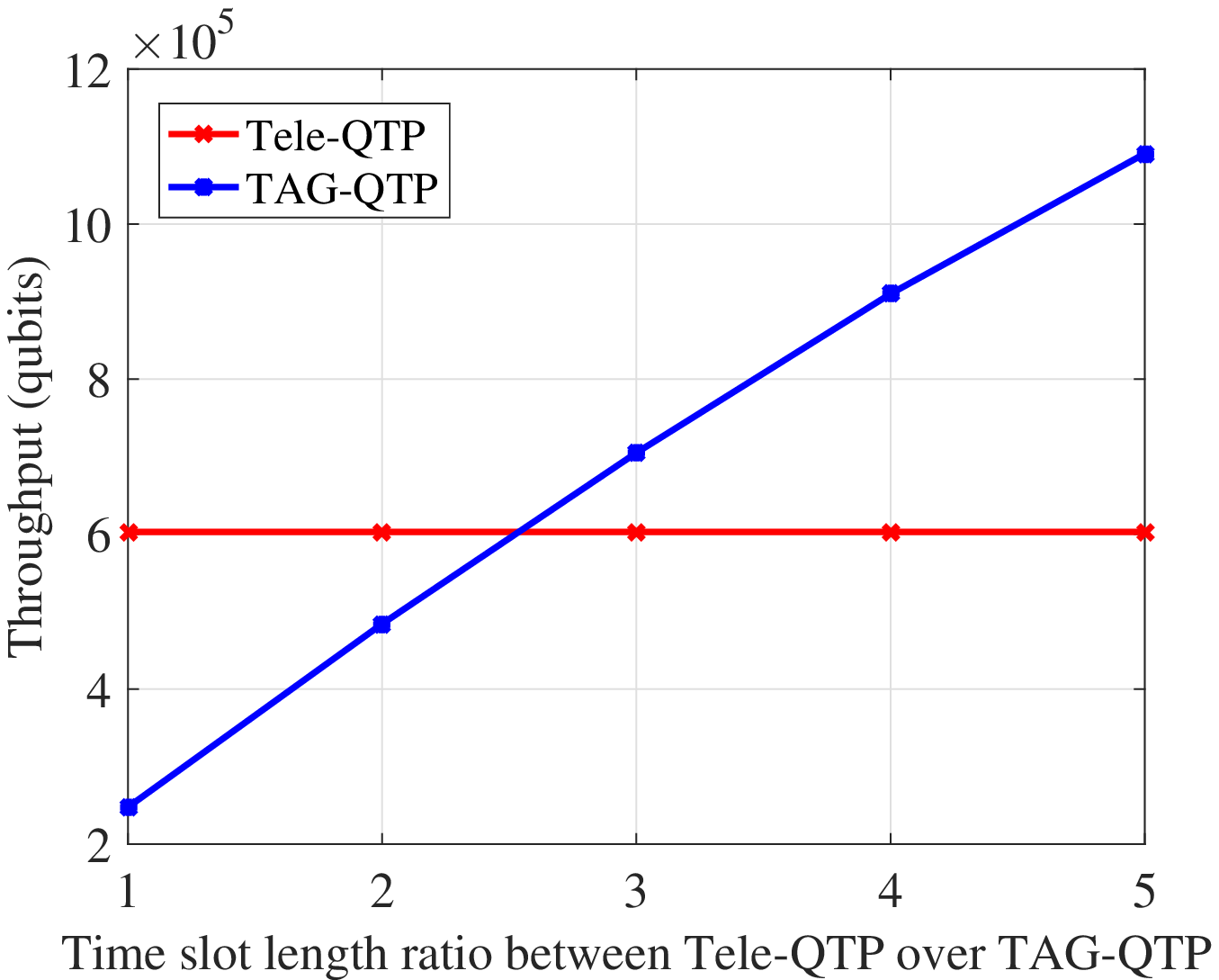}
	}
	\vspace{-0.1in}
	\caption{\tele vs. \plain in \shqdn.}\label{fig:tradeoff}
	\vspace{-0.2in}
\end{figure}

So far, we have only considered the case where \plain is deployed in a \mhqdn. Technology aside and  performance wide, it seems that under our assumptions made so far, a \mhqdn does not have much advantage over a \tpqdn. 
In this subsection, we seek to answer a blue-sky question of if and how a \shqdn might be a better choice. If we were to ignore the effect of the transmission distance on the  transmission loss probability, the performance of a \shqdn is an upper bound on that of a \mhqdn. Accordingly, comparing the performance of a \shqdn and a \tpqdn may further shed light on the comparison we have done so far between a \tpqdn and \mhqdn as well.  Accordingly, we simulate  a \shqdn with 50 all-optical switches or a  \tpqdn with 50 repeaters,  and study their throughput after injecting 100 sessions into both networks.

In the simulations,  we consider two main parameters impacting the throughput: the length of each time slot and the success probability to deliver each sharing. The former can be much smaller in a \shqdn (or \mhqdn) than that in a \tpqdn as explained earlier. In this study, we assume that a time slot in a \tpqdn can be 1 to 5 times longer.


To study the effect of the success probability of sending each sharing, we first assume that the length of one time slot is the same in both \tpqdn and \shqdn, and change the success probability. 
As a futuristic study, we will not limit the success probability to a small value afforded by the state-of-the-art technology~\cite{singlephoton}. This is because today's transmission success probability is too low for supporting quantum data exchanges, although it is enough for some other applications, such as QKD. Instead, we will assume a high success probability, which may be achieved with the advance in technology and/or in a room size network for clustering several small quantum computers. 

From Fig.~\ref{subfig:prob}, we can see that when each sharing can be sent successfully with a probability close to 1, \shqdn can indeed achieve a higher throughput than \tpqdn, despite the need to encode each data qubit with multiple sharings. This is because that without having to share (or compete against) with each other the quantum memory at repeaters as in a \tpqdn, each session can achieve a larger sending window, which leads to a larger throughput. When each sharing can be delivered with a probability about 0.84, these two QDNs can achieve almost the same network throughput. 

To study the effect of the length of each time slot in the \shqdn on the performance, we fix the success probability in the \shqdn to 0.65.  From  Fig.~\ref{subfig:slot}, we can see that a shorter time slot in the \shqdn leads to a larger throughput in almost liner fashion. When the time slot length in the \tpqdn is about 2.5x of that in the \shqdn, both methods achieve almost the same network throughput. 

\vspace{-0.1in}
\section{Conclusions}\label{sec:conclu}

In this work, two first-of-the-kind quantum transport protocols, \tele and \plain, have been proposed which use novel mechanisms to support reliable quantum data exchanges in quantum data networks (QDNs). Different from the classical TCP which assumes that payloads can be sent as packets, and lost packets can be retransmitted using their copies in a sending buffer, both \tele and \plain will have to deal with streams of data qubits, avoid congestion in a QDN, and  manage the limited quantum memory at not only Alice and Bob, but also quantum nodes (repeaters and relays). Our analysis and extensive simulations have shown that \tele and \plain can achieve not only a high throughput, but also fair (and efficient) resource allocation for all the QTP sessions sharing the same bottleneck. Comparisons between the two methods for quantum data exchanges, namely \tp and \plain and their corresponding QDNs have been made and the results help shed new light on the tradeoffs involved. There exist many extension or improvement works at the transport layer and other network layers, including distributed quantum computing algorithms that can take advantage of the proposed QTPs for QDNs and the enabled cluster of multiple small quantum computers.

\bibliographystyle{ACM-Reference-Format}
\bibliography{main}

\appendix

\section{State Machine for sending a data qubit with (2,3)--threshold sharing scheme}\label{app:sharing}
\begin{figure}
	\centering
	\includegraphics[width=0.8\linewidth]{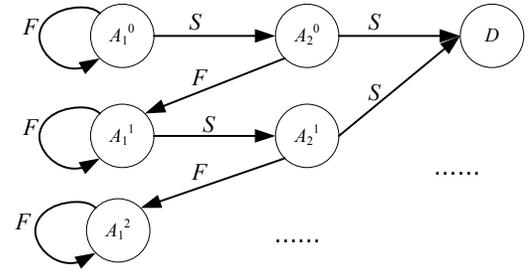}
	\vspace{-0.1in}
	\caption{State machine to transmit data qubit $A$ with encoded sharings.}\label{fig:coding}
	\vspace{-0.2in}
\end{figure}
In Fig.~\ref{fig:coding}, we show the state machine to send a data qubit $A$ by leveraging the quantum secret sharing, where $F$ and $S$ are short for "Fail" and "Succeed", respectively. In this state machine, $A_1^0A_2^0A_3^0$ are the (2,3)-threshold sharings of the original data qubit $A$, and $A_1^kA_2^kA_3^k$ are the (2,3)-threshold sharings of $A_3^{k-1}$ for $k\geq 1$. In state $A_n^k$, sharing $A_n^k$ is to be sent, action $F$ means sending fails, action $S$ means sending successes, and the data qubit is successfully delivered in state $D$.

\section{Proof of Theorem~\ref{the:congestion}}\label{app:congestion}
\begin{proof}
	In a steady state, all the QTP sessions are in the CA state. Let $W^{(t)}_n$ be the window size of session $n$ at time slot $t$, then each QTP session $n$ will announce its window size as $W^{(t+1)}_n = W^{(t)}_n + 1$. If there is no congestion at time slot $t$, we know $\sum_n W^{(t)}_n \leq C$.
	According to Algorithm~QMA, at time slot $t+1$, the sending window of session $n$ can be cut to $\lfloor (W^{(t)}_n+1)/2 \rfloor$.
	Since $\lfloor (W^{(t)}_n+1)/2 \rfloor \leq W^{(t)}_n$, we know $\sum_n \lfloor W^{(t)}_n+1 \rfloor \leq \sum_n W^{(t)}_n \leq C$, \ie, if there is no congestion at time slot $t$, so will be at time slot $t+1$. Now, we consider the starting time slot. The window sizes of all the QTP sessions are all 1, and they require quantum memory to support transmitting $N$ qubits in total. Based on the assumption that $N\leq C$, all the qubits can be served. Accordingly, we can conclude that our proposed QTPs is congestion-free.
\end{proof}

\section{Proof of Theorem~\ref{the:fairness}}\label{app:fairness}
\begin{figure*}[htbp]
	\subfigure[Sending window size in \tele.]{\label{subfig:microTele}
		\includegraphics[width=0.31\linewidth]{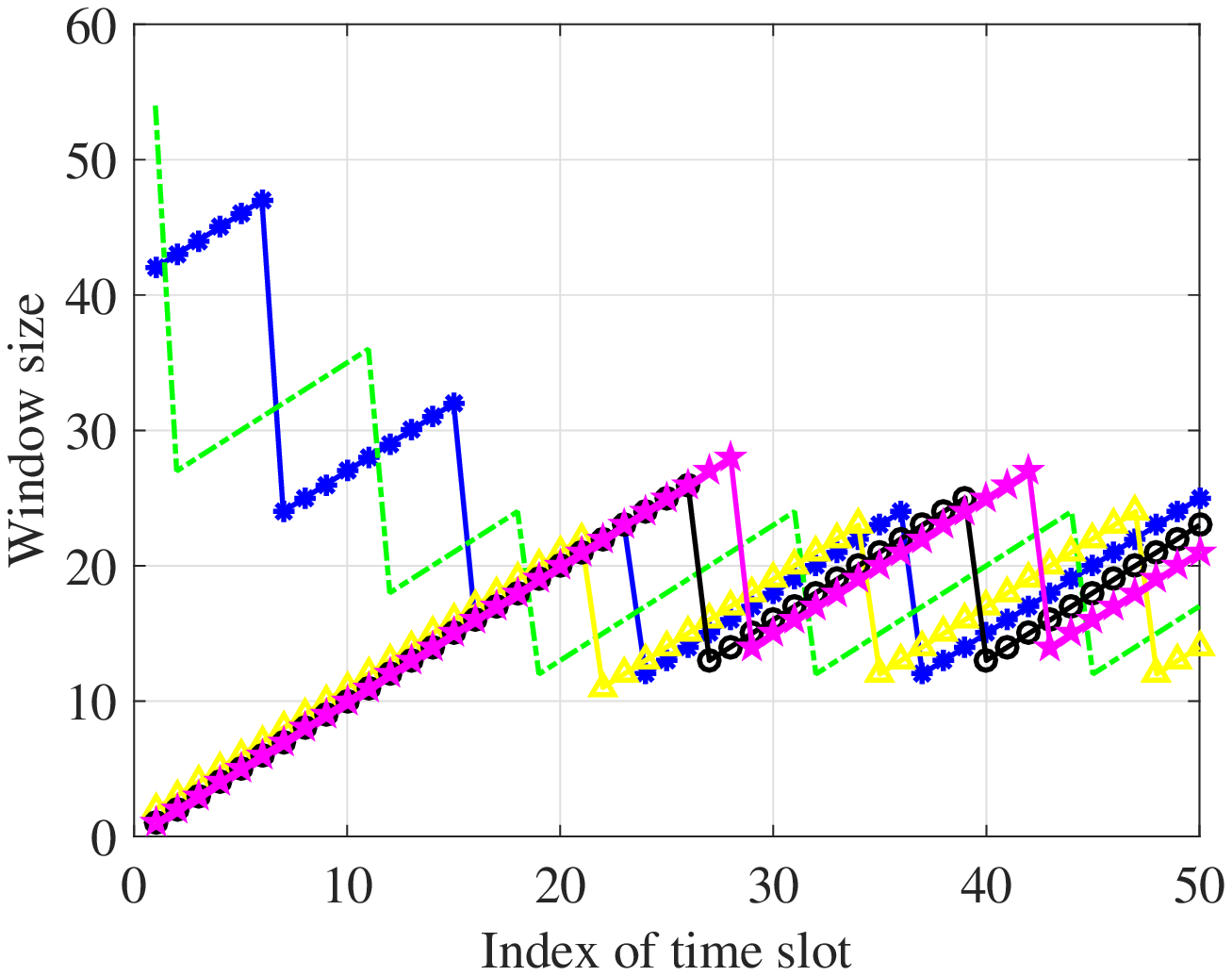}
	}
	\subfigure[Sending window size in \plain.]{\label{subfig:microPlain}
		\includegraphics[width=0.31\linewidth]{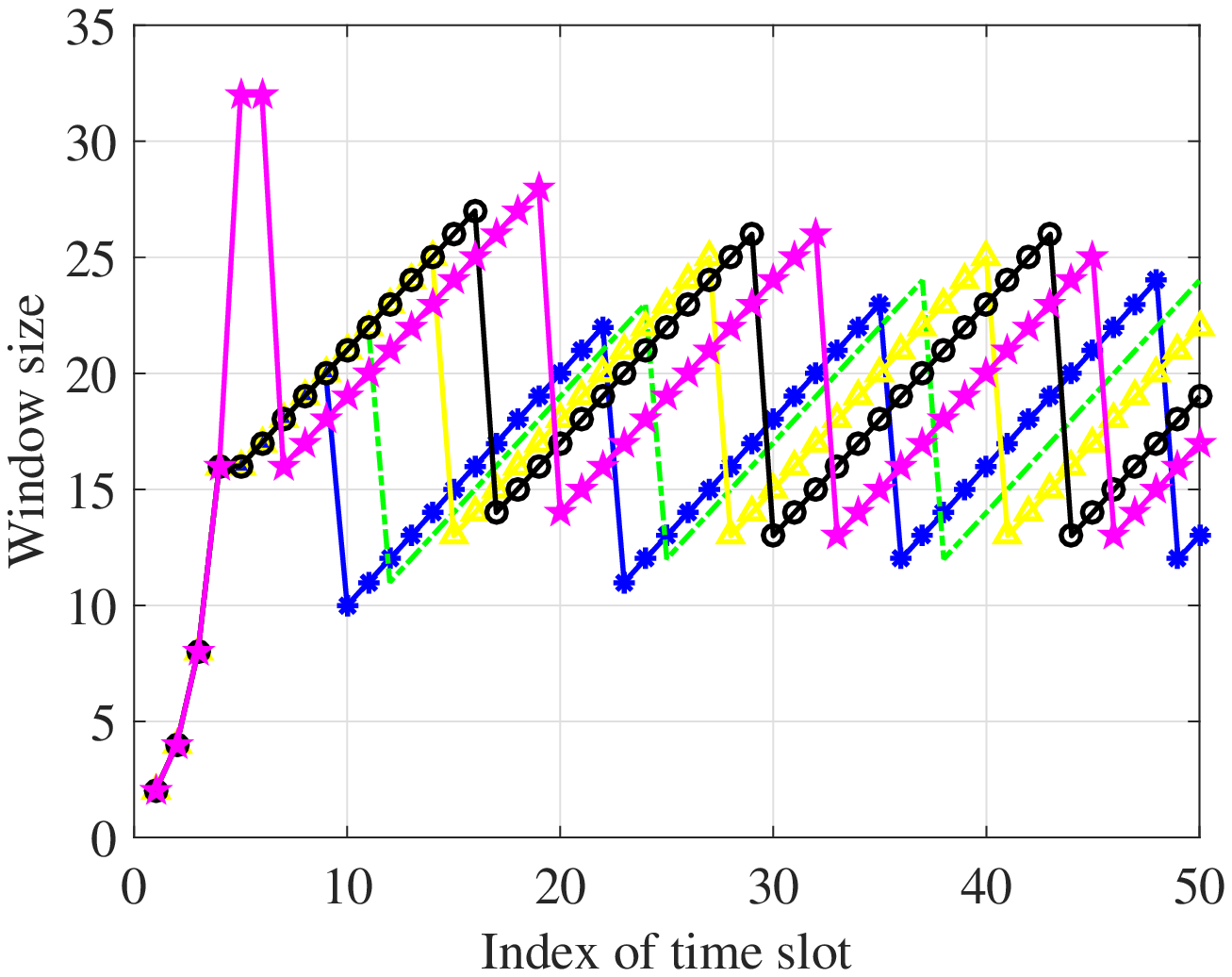}
	}
	\subfigure[How memory utilization changes.]{\label{subfig:microUtilization}
		\includegraphics[width=0.31\linewidth]{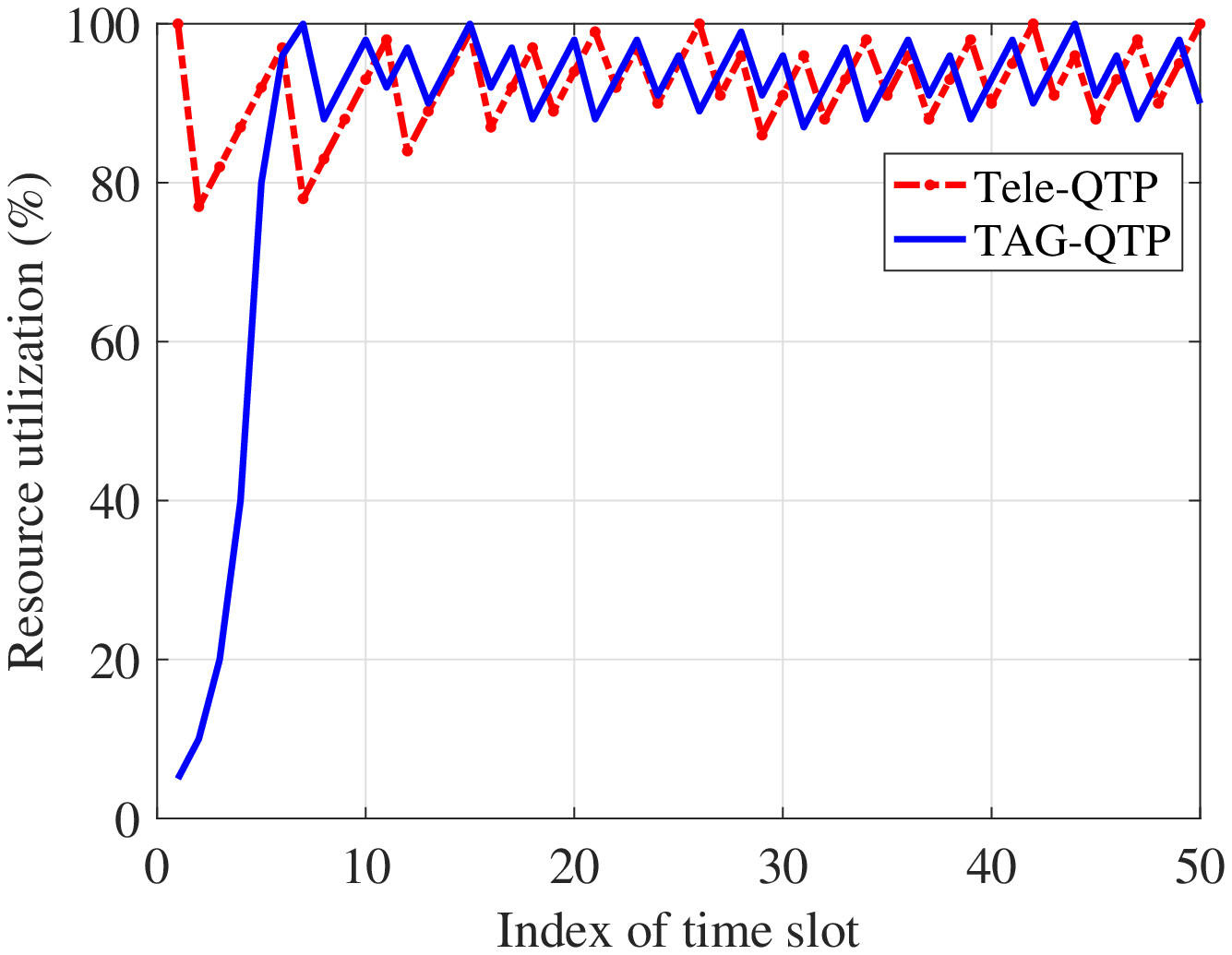}
	}
	\vspace{-0.1in}
	\caption{QTP performance when sharing a single bottleneck.}\label{fig:micro}
	\vspace{-0.2in}
\end{figure*}
To proof Theorem~\ref{the:fairness}, we leverage a fluid model (\ie, with continuous quantum resource amount and sending window sizes) as in~\cite{anaTCP} for simplicity. At first, we propose the following lemmas:

\begin{lemma}\label{lem:bound}
	In the steady state, suppose the sending window sizes of all the QTP sessions satisfy $W_1 \geq W_2 \geq \cdots \geq W_N$, we have
	$$\frac{W_1}{2} \leq W_N$$
\end{lemma}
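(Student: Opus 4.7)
I would prove the bound $W_1 \leq 2 W_N$ as an invariant preserved across each time slot of the steady state, working in the fluid model used in Section~\ref{subsec:analysis}. In a given slot the multiset of windows evolves in one of two ways: an additive-increase (AI) round in which every $W_i$ becomes $W_i+1$, or a multiplicative-decrease (MD) round in which QMA marks the largest sessions for halving before the unit increment. The induction would then show that if $W_1 \leq 2 W_N$ holds entering a slot, it still holds on exit.

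The AI case is immediate: adding $1$ to both extremes of a ratio already at most $2$ can only pull it closer to $1$. For the MD case I would exploit the key structural property of Algorithm~QMA, namely that because it scans sessions in decreasing order of $W_i$ and halves them until the available memory suffices, the halved set is always a prefix $\{1,\ldots,k\}$ of the sorted list. Consequently the new maximum is $\max(W_1/2,\,W_{k+1})+1$ and the new minimum is $\min(W_k/2,\,W_N)+1$, and a four-way case split on which element realizes each extreme reduces the required ratio bound in every subcase either to the original sort order $W_1 \geq \cdots \geq W_N$ alone or to the inductive hypothesis.

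I do not expect any of the four subcases to be genuinely hard. For example, in the mixed subcase where the new max is $W_{k+1}$ and the new min is $W_k/2$, the ratio is $2 W_{k+1}/W_k$, already at most $2$ by the sort order; and in the subcase where both extremes come from the cut prefix, the ratio $W_1/W_k$ is bounded by $W_1/W_N \leq 2$ using the inductive hypothesis and $k \leq N$. Interestingly, the exact stopping index $k$ of QMA plays essentially no role in the argument; only the prefix-cut structure matters.

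The main obstacle I anticipate is in the base case rather than the inductive step: I need to argue that the steady state is entered with the invariant already holding. I would handle this by noting that slow start leaves all sessions with identical windows (so the invariant is trivially satisfied), and that the very first congestion event halves only the top sessions, so $W_1 \leq 2 W_N$ still holds at the transition into congestion avoidance, providing the base from which the induction carries the invariant forward indefinitely.
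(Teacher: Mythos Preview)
Your argument is correct, and your inductive step is actually somewhat cleaner than the paper's, but the two proofs are organized quite differently.

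For the maintenance step, the paper does \emph{not} do a four-way case split over a general prefix cut. Instead it argues at the level of consecutive cut events: once the invariant holds, the currently largest session (call it $n$) is cut to $W_n/2$, whereupon session $n{+}1$ becomes the largest; if the next cut happens $K$ slots later, the paper verifies the single inequality $W_n/2 + K \ge (W_{n+1}+K)/2$, which follows immediately from $W_n \ge W_{n+1}$, and concludes that the freshly cut session is always the new minimum, so the max/min ratio never exceeds $2$. This is shorter but implicitly leans on the one-session-at-a-time round-robin pattern that really only emerges once the system has settled (and is formalized later in Lemma~\ref{lem:winGap}). Your prefix-cut analysis avoids that forward dependence: you handle an arbitrary number $k$ of simultaneously halved sessions and your four subcases go through using only the sort order and the inductive hypothesis, exactly as you sketch.

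Where the paper is stronger is the base case. Rather than assuming synchronized slow start, the paper observes that as long as $W_1/2 > W_N$ the smallest session is never selected for halving (QMA only reaches it after cutting everyone larger, and those cuts still leave values above $W_N$), so $W_N$ is monotone increasing while $W_1$ is repeatedly halved; hence the invariant must eventually hold at some slot, and from there the maintenance argument takes over. This convergence argument covers staggered arrivals, which your simultaneous-start assumption does not. If you swap in the paper's entry argument for your base case and keep your own inductive step, you get a proof that is both more general at the start and more self-contained in the induction than what the paper presents.
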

\begin{proof}
	If $\frac{W_1}{2} > W_N$, $W_N$ will not decrease before a session's window is cut to be a size smaller than $W_N$. Therefore, there must be some time slot in which Lemma~\ref{lem:bound} holds. From this time slot on, consider a time slot that we need to cut the window of session $n$, its window size will become $\frac{W_n}{2}$, and session $n+1$ will have the largest window size. After $K$ time slots, we have to cut the window of session $n+1$. Since  $W_n \geq W_{n+1}$, there would be $\frac{W_n}{2}+K \geq \frac{W_{n+1}+K}{2}$ for any $K\geq 0$. That is the session with largest window size will become the session with smallest window size at the time slot when its window size is to be cut by half. This means Lemma~\ref{lem:bound} will always hold.
\end{proof}

\begin{lemma}\label{lem:winGap}
	In the steady state, suppose the sending window sizes of all the QTP sessions satisfy $W_1 \geq W_2 \geq \cdots \geq W_N$, with fluid model we know if $\frac{W_1}{2} = KN$, then
	$$W_n - W_{n+1} = K$$
	for all $n<N$.
\end{lemma}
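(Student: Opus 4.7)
\medskip

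\noindent\textbf{Proof proposal for Lemma~\ref{lem:winGap}.}
The plan is to exploit the periodic structure of the AIMD dynamics under the fluid approximation. Between cut events, every window grows continuously at unit rate, and the aggregate $\sum_n W_n$ stays at the bottleneck capacity $C$. Define $K$ by the hypothesis $W_1/2 = KN$; I will show that in steady state $K$ is precisely the time elapsed between two consecutive cut events, and that $K$ is exactly the spacing between adjacent window sizes.

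First, I would fix a cycle of $N$ consecutive cuts, taking the instant just before the first cut as a reference, with windows ordered $W_1 \ge W_2 \ge \cdots \ge W_N$. Lemma~\ref{lem:bound} guarantees that immediately after session $1$ is halved its window $W_1/2$ is no larger than $W_N$, so session $1$ drops to the bottom of the order and cannot be cut again before every other session is cut. By the same inequality applied inductively after each subsequent halving, the sessions are cut in the order $1,2,\ldots,N$ within the cycle, and each inter-cut gap is some (possibly varying) number of slots. I would then invoke the definition of steady state: the configuration of windows one cycle later must coincide (as a multiset) with the starting configuration. Equating the total removed by the $N$ halvings with the total growth $N \cdot (\text{cycle length})$ forces the cycle length to be $NK$ and, together with the fact that the cut instants are equally spaced by the symmetry of the steady state, pins the spacing between consecutive cuts to exactly $K$ time slots.

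With uniform spacing $K$ in hand, the algebra is routine. Just before cut $n$ the session being cut has grown for $(n-1)K$ slots since the cycle began, so its pre-cut window is $W_n + (n-1)K$. The balance between what the cut removes and what grows back during the next $K$ slots reads
\begin{equation*}
\tfrac{1}{2}\bigl(W_n + (n-1)K\bigr) = NK,
\end{equation*}
which yields $W_n = (2N - n + 1)K$ for every $n = 1,\ldots,N$. Subtracting consecutive values gives $W_n - W_{n+1} = K$, as claimed; and the boundary case $n=1$ gives $W_1 = 2NK$, consistent with the hypothesis $W_1/2 = KN$.

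The main obstacle, and the step that deserves the most care, is justifying that in steady state the cuts are equally spaced rather than merely averaging to spacing $K$. I would argue this by contradiction: if two consecutive gaps differed, then the window of the session cut at the shorter gap would, over subsequent cycles, drift relative to the others, contradicting the steady-state assumption that the multiset of windows returns to itself after $N$ cuts. Everything else, including the cut order within a cycle and the final arithmetic progression, follows mechanically once equal spacing is established.
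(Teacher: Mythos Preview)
Your argument takes a genuinely different route from the paper's. You work globally: fix a full cycle of $N$ cuts, use the return of session~1 to window $W_1$ to pin the cycle length at $NK$, argue the cuts are equally spaced, and then solve the per-cut balance $\tfrac12\bigl(W_n+(n-1)K\bigr)=NK$ to obtain the closed form $W_n=(2N-n+1)K$. The paper instead works locally, via a contraction on the gap: after session~1 is halved it frees $W_1/2=NK$ units, so the next cut occurs exactly $K$ slots later when session~2 has window $W_2+K$; comparing the old gap $W_1-W_2$ with the new gap $\tfrac{W_1}{2}+K-\tfrac{W_2+K}{2}$ gives a change of $\tfrac{W_1-W_2}{2}-\tfrac{K}{2}$, so the gap moves toward $K$ whenever it differs from $K$, and the same reasoning is then extended to every adjacent pair $W_n-W_{n+1}$.

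The paper's contraction is precisely the quantitative form of the ``drift'' argument you flag as the main obstacle. Your route has the pleasant byproduct of delivering the explicit formula for every $W_n$ once equal spacing is granted, but that equal-spacing step is where all the content lives, and invoking ``symmetry of the steady state'' is not enough on its own: periodicity of the \emph{multiset} of windows after $N$ cuts does not by itself exclude a non-uniform yet periodic pattern of inter-cut times. Making your drift-by-contradiction sketch rigorous would amount to carrying out the paper's contraction computation, so the two proofs ultimately converge at exactly the step you identified as delicate.
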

\begin{proof}
	Consider the time at which we should reduce $W_1$ to avoid congestion, it will lease $\frac{W_1}{2}$ units of resources, which will be shared by $N$ QTP sessions. Accordingly, we will cut window again in $\frac{W_1}{2N} = K$ time slots. Now, the sending window of session 1 is $\frac{W_1}{2} + K$, while it is $\frac{W_2 + K}{2}$ for session 2. From Lemma~\ref{lem:bound}, we know $W_N \geq \frac{W_1}{2} + K \geq \frac{W_2 + K}{2}$. By investigating how the window size difference between session 1 and session 2 changes
	\begin{displaymath}
	\begin{split}
	\Delta & = (W_1 - W_2) - (\frac{W_1}{2} + K - \frac{W_2 + K}{2})\\
	& = \frac{W_1 - W_2}{2} + \frac{K}{2}
	\end{split}
	\end{displaymath}
	When $W_1 - W_2 > K$, $\Delta > 0$, \ie, the window size gap will decrease, while it will increase if $W_1-W_2 < K$ since $\Delta < 0$. Accordingly, in the steady state $W_1 - W_2 = K$. Above inference can be extended to $W_n - W_{n+1}$ for all $n<N$, which concludes the proof.
\end{proof}
Now, we are ready to prove Theorem~\ref{the:fairness}.
\begin{proof}
	From Lemma~\ref{lem:winGap}, let $W^*=W_1^{(t)} \geq W_2^{(t)} \geq \cdots \geq W_N^{(t)}$ be the sending window sizes at time slot $t$, $\frac{W^*}{2}=KN$, we know that at time slot $t+K$, there will be $W_{n+1}^{(t+K)} = W_{n+1}^{(t)} + K = W_n^{(t)}$. Accordingly, the window size combination is varying with a period of $K$ time slots, and the window size of each session is varying with a period of $KN$ time slots, \ie, window size of every QTP session is sawtoothly varying on $[W^*/2, W^*]$, and the average window size is $\frac{W^* + W^*/2}{2} = 3W^*/4$. 
\end{proof}

\section{Proof of Theorem~\ref{the:converving}}\label{app:conserving}
\begin{proof}
	From Theorem~\ref{the:fairness}, we know the average window size of each QTP session is $\frac{3W^*}{4}$. Since there is no congestion in the system (Theorem~\ref{the:congestion}), we have $\frac{3W^*}{4} \times N \leq C$. That is $W^* \leq \frac{4C}{3N}$. Then, in a specific time slot, there are at most $\frac{W^*}{2} = 2C/3N$ units, \ie, $2C/3CN \times 100 = 200/3N$ percent, of quantum memory sitting idle. 
\end{proof}

\section{Microcosmic Simulation to Validate Analysis in  Section~\ref{subsec:analysis}}\label{app:validation}

To show the correctness of analysis in Section~\ref{subsec:analysis}, we deploy a many-to-one network with 5 ingresses, each of which sets up a QTP session goes to the same egress, which becomes a bottleneck. In addition, we assume that the egress node has 100 units of quantum memory all of which are used for receiving data qubits. When studying \tele, we start the five QTP sessions at different sending window sizes which may occur when there are new QTP sessions arrive, while in a \shqdn, all the QTP sessions start at SS state and the initial window sizes are all 2, which emulates the case when multiple connections start simultaneously. The simulation results are shown in Fig.~\ref{fig:micro}. From this figure, we can make following observations.

First, from Fig.~\ref{subfig:microTele} \& \ref{subfig:microPlain}, we can see that with both QTPs, no matter what the initial sending window sizes are, they will converge to almost the same range. During the first 100 time slots, the Jain's fairness index on the average sending window size of all the five QTP sessions is 0.9914 if in the \tpqdn, while it is 0.9984 in \shqdn. Both of them are very close to the optimal fairness value 1. This coincides with the statement of Theorem~\ref{the:fairness}.

Second, from Fig.~\ref{subfig:microUtilization}, we can see that though the sending window of all the sessions in a \shqdn start from 2, the quantum memory utilization increases very quickly and achieves a high utilization level since sending window size of each session increases by 2x in every time slot. In both QDNs, when the quantum memory utilization approaches 1, the utilization will oscillate at a high level. After 10 time slots, the minimum quantum memory utilization at the egress node is about 87\%, which there is at most $\frac{200}{3 \times 5} \approx 13$ percent of the quantum memory sitting idle. This shows the correctness of Theorem~\ref{the:converving}.

In addition, during the entire simulation, we do not observe any congestion, which is the expectation according to Theorem~\ref{the:congestion}.


\end{document}